\def\Mf{\mathfrak{M}}
\def\Fc{\mathcal{F}}
\def\Km{\mathrm{Km}}
\def\prod{\rightarrow}
\def\prods{\Rightarrow}
\def\Scal{\mathcal{S}}
\def\SKm{\mathrm{SKm}}
\def\DKm{\mathrm{DKm}}
\def\tup{t\!\uparrow}
\def\tdn{t\!\downarrow}
\def\Acal{\mathcal{A}}
\def\Pcal{\mathcal{P}}
\def\Rcal{\mathcal{R}}
\def\impl {\supset}
\newcommand{\trans}[1]{\stackrel{#1}{\longrightarrow}}
\newcommand{\next}[1]{\succ^{#1}}
\newcommand{\obox}[1]{{[#1]}}
\newcommand{\odia}[1]{{\langle #1 \rangle}}
\newcommand{\seq}[2]{{({#1})\{{#2}\}}}
\newcommand{\lneg}[1]{{{#1}^{\bot}}}
\def\idd{\mathit{id}_d}
\def\landd{\land_d}
\def\lord{\lor_d}
\newcommand{\boxrd}[1]{\obox{#1}_d}
\newcommand{\diardn}[1]{\odia{#1 \! \downarrow}}
\newcommand{\diarup}[1]{\odia{#1 \! \uparrow}}
\def\addtree{\ll}
\def\assign{\mathrel{{\mathrel{\mathop:}=}}}
\begin{document}

\begin{frontmatter}
\title{Grammar Logics in Nested Sequent Calculus: Proof Theory and Decision Procedures}
\author{Alwen Tiu}
\address{Research School of Computer Science, The Australian National University}
\author{Egor Ianovski}
\address{Department of Computer Science, University of Auckland}
\author{Rajeev Gor\'e}
\address{Research School of Computer Science, The Australian National University}

\begin{abstract}
A grammar logic refers to an extension to the multi-modal logic K in
which the modal axioms are generated from a formal grammar.  We
consider a proof theory, in nested sequent calculus, of grammar logics
with converse, i.e., every modal operator $\obox{a}$ comes with a
converse $\obox{a}^{-1}.$ Extending previous works on nested sequent
systems for tense logics, we show all grammar logics (with or without
converse) can be formalised in nested sequent calculi, where the
axioms are internalised in the calculi as structural rules. Syntactic
cut-elimination for these calculi is proved using a procedure similar
to that for display logics.  If the grammar is context-free, then one
can get rid of all structural rules, in favor of deep inference and
additional propagation rules.  We give a novel semi-decision procedure
for context-free grammar logics, using nested sequent calculus with
deep inference, and show that, in the case where the given
context-free grammar is regular, this procedure terminates. Unlike all
other existing decision procedures for regular grammar logics in the
literature, our procedure does not assume that a finite state
automaton encoding the axioms is given.
\end{abstract}

\begin{keyword}
Nested sequent calculus, display calculus, modal logics,  deep inference. 
\end{keyword}

\end{frontmatter}

\section{Introduction}

A grammar logic refers to an extension of the multi-modal logic K in
which the modal axioms are generated from a formal
grammar. Thus
given a set $\Sigma$ of indices, and a grammar production rule 
as shown below left,  where each $a_i$ and $b_j$ are in
$\Sigma$, we extend K with the multi-modal axiom shown below right:
$$a_1 a_2 \cdots a_l \prod
b_1 b_2 \cdots b_r
\qquad\qquad\qquad
\obox{a_1} \obox{a_2} \cdots \obox{a_l} A \impl
\obox{b_1} \obox{b_2} \cdots \obox{b_r} A
$$
The logic is a context-free grammar logic if $l=1$ and furthermore, is
a right linear grammar logic if the production rules also define a
right linear grammar. The logic is a regular grammar logic if the set
of words generated from each $a \in \Sigma$ using the grammar
production rules is a regular language. A right linear grammar logic
is also a regular grammar logic since a right linear grammar can be
converted to a finite automaton in polynomial time. Adding 
``converse'' gives us alphabet symbols like $\bar{a}$ which
correspond to the converse modality $\obox{\bar a}$ and lead to
multi-modal extensions of tense logic Kt where each
modality 
$\obox{a}$ and its converse 
$\obox{\bar a}$
obey the interaction axioms 
$
A \impl \obox{a}\odia{\bar a} A$
and 
$A \impl \obox{\bar a}\odia{a} A.$

Display calculi~\cite{Belnap82JPL} can handle grammar logics with converse since 
they all fall into the primitive fragment identified by Kracht~\cite{Kra96}. Display calculi all enjoy
Belnap's general cut-elimination theorem, but it is well-known that they are not suitable for proof-search.
Our work is motivated by the problem of automating proof search for 
display calculus. As in our previous work~\cite{Gore09tableaux,Gore10AiML,Gore11LMCS}, 
we have chosen to work not directly in display calculus, but in a slightly different calculus based 
on nested sequents~\cite{kashima-cut-free-tense,Brunnler09AML}, which we call
shallow nested sequent calculi. The syntactic constructs of nested sequents
are closer to traditional sequent calculu, so as to allow us 
to use familiar notions in sequent calculus proof search procedures, such as  
the notions of saturation and loop checking, to automate proof search. 
A common feature of shallow nested sequent calculus and display calculus is 
the use display postulates and other complex structural rules.
These structural rules are the main obstacle to effective proof search,
and our (proof theoretic) methodology for designing proof search calculi is guided by the problem of eliminating
these structural rules entirely. 
We show here
how our methodology can be used to derive proof search calculi for context-free grammar logics. 

The general satisfiability problem for a grammar logic is to decide
the satisfiability of a given formula when given a set of production
rules or when given an explicit finite state automaton (FSA) for the
underlying grammar. 

Nguyen and Sza{\l}as~\cite{Nguyen11StudiaLogica} give an excellent
summary of what is known about this problem, as outlined next. Grammar
logics were introduced by del Cerro and
Penttonen~\cite{del-Cerro-Penttonen}. Baldoni et
al~\cite{BaldoniGM98} used prefixed tableaux
to show that this problem is decidable for right linear logics but is
undecidable for context free grammar
logics. Demri~\cite{Demri01} used an embedding into
propositional dynamic logic with converse to prove this problem is
EXPTIME-complete for right linear logics. 
Demri and de Nivelle~\cite{Demri05} gave an embedding of
the satisfiability problem for regular grammar logics into the
two-variable guarded fragment of first-order logic and showed that
satisfiability of regular grammar logics with converse is also
EXPTIME-complete. 
Seen as description logics with inverse roles and complex role inclusions,
decision procedures for regular grammar logics have also been studied
extensively by Horrocks, et. al., see, e.g., \cite{Horrocks04AI,Horrocks06KR,Kazakov08KR}.
Gor\'e and Nguyen~\cite{GoreN05} gave an EXPTIME
tableau decision procedure for the satisfiability of regular grammar
logics using formulae labelled with automata states.  Finally, Nguyen
and Sza{\l}as~\cite{Nguyen09CADE,Nguyen11StudiaLogica} gave
an extension of this method to handle converse by using the cut rule.
In an unpublished manuscript, Nguyen has
shown how to use the techniques of Gor\'e and
Widmann~\cite{GoreW10} to avoid the use of the cut rule. 
But as far as we know, there is no
comprehensive sequent-style proof theory for grammar logics with
converse which enjoys a syntactic cut-elimination theorem and which is
amenable to proof-search.

We consider a proof theory, in nested sequent calculus,
of grammar logics with converse, i.e., every modal operator $\obox{a}$
comes with a converse $\obox{a}^{-1}.$ Extending previous works on
nested sequent systems for (bi-)modal logics~\cite{Gore09tableaux,Gore11LMCS}, we show,
in Section~\ref{sec:skm}, that all grammar
logics (with or without converse) can be formalised in (shallow)
nested sequent calculi, where the axioms are internalised in the
calculi as structural rules. Syntactic cut-elimination for these
calculi is proved using a procedure similar to that for display
logics. We then show, in Section~\ref{sec:dkm}, that if the grammar is context-free, then one can get rid of all
structural rules, in favor of deep inference and additional 
propagation rules.

We then recast the problem of deciding grammar logics for the specific
cases where the grammars are regular, using nested sequent calculus
with deep inference.  We first give, in Section~\ref{sec:auto-proc}, 
a decision procedure in the case where the regular grammar is given in the form of a FSA. 
This procedure is similar to existing tableaux-based 
decision procedures~\cite{Horrocks06KR,Nguyen09CADE,Nguyen11StudiaLogica}, 
where the states and transitions of the FSA is incorporated into proof rules for propagation
of diamond-formulae. 
This procedure serves as a stepping stone to defining the more general decision procedure 
which does not depend
on an explicit representation of axioms as a FSA in Section~\ref{sec:grammar-proc}. 
The procedure in Section~\ref{sec:grammar-proc} is actually a semi-decision procedure that 
works on any finite set of context-free grammar axioms. However, we show that, in the case where the
given grammar is regular, this procedure terminates. The procedure avoids the requirement 
to provide a FSA for the given axioms. This is significantly different from existing decision procedures
for regular grammar logics~\cite{Demri05,GoreN05,Nguyen11StudiaLogica,Nguyen09CADE}, 
where it is assumed that a FSA encoding the axioms of the logics is given. 

In this work, we follow Demri and de Nivelle's presentation of grammar axioms as a 
semi-Thue system~\cite{Demri05}. 
The problem of deciding whether a context-free semi-Thue system is regular or not appears to be 
still open; see \cite{Kazakov08KR} for a discussion on this matter.  
Termination of our generic procedure for regular grammar logics 
of course does not imply solvability of this problem
as it is dependent on the assumption that the given grammar is regular 
(see Theorem~\ref{thm:grammar-proc-terminates}).

\section{Grammar logics}
\label{sec:logic}

The language of a multi-modal logic is defined w.r.t. to an alphabet
$\Sigma$, used to index the modal operators.  We use $a, b$ and $c$,
possibly with subscripts, for elements of $\Sigma$ and use $u$ and $v$,
for elements of $\Sigma^*$, the set of finite strings over $\Sigma$. 
We use $\epsilon$ for the empty string.
We define an operation $\bar .$ (converse) on alphabets to capture
converse modalities following Demri~\cite{Demri05}. The converse
operation satisfies $\bar {\bar a} = a.$ We assume that $\Sigma$ can
be partitioned into two distinct sets $\Sigma^+$ and $\Sigma^-$ such that 
$a \in \Sigma^+$ iff $\bar a \in \Sigma^{-}.$ The converse operation is extended to
strings in $\Sigma^*$ as follows: if $u = a_1a_2 \ldots a_n$, then
$
\bar u = \bar a_n \bar a_{n-1} \ldots \bar a_2 \bar a_1,
$
where $n \geq 0.$ Note that if $u = \epsilon$ then $\bar u = \epsilon.$

We assume a given denumerable set of atomic formulae, ranged over
by $p$, $q$, and $r.$
The language of formulae is given by the following, where
$a \in \Sigma$:
$$
A ::= p \mid \neg A \mid A \lor A \mid A \land A \mid \obox a A \mid \odia a A
$$
Given a formula $A$, we write $\lneg A$
for the negation normal form (nnf) of $\neg A.$ Implication $A \impl B$ is
defined as $\neg A \lor B.$

\begin{definition}
A {\em $\Sigma$-frame} is a pair $\langle W, R\rangle$
of a non-empty set of worlds and a set of binary relations $\{ R_a \}_{a\in \Sigma}$ 
over $W$ satisfying, for every $a \in \Sigma$, 
$
R_a = \{(x,y) \mid R_{\bar a}(y,x) \}.
$
A {\em valuation} $V$ is a mapping from propositional variables to sets of worlds. 
A {\em model} $\Mf$ is a triple $\langle  W,R,V\rangle$ where $\langle W,R \rangle$
is a frame and $V$ is a valuation. 
The relation $\models$ is defined inductively as follows: 
\begin{itemize}
\item $\Mf, x \models p$ iff $x \in V(p).$
\item $\Mf, x \models \neg A$ iff  $\Mf, x \not \models A$.
\item $\Mf, x \models A \land B$ iff $\Mf, x \models A$ and $\Mf, x \models B.$
\item $\Mf, x \models A \lor B$ iff $\Mf, x \models A$ or $\Mf, x \models B.$
\item For every $a \in \Sigma$, 
$\Mf, x \models \obox a A$ iff for every $y$ such that $R_a(x,y)$, 
$\Mf, y \models A.$
\item For every $a \in \Sigma$, 
$\Mf, x \models \odia a A$ iff there exists $y$ such that $R_a(x,y),$
$\Mf, y \models A.$
\end{itemize}
A formula $A$ is {\em satisfiable} iff there exists a $\Sigma$-model
$\Mf = \langle W, R, V \rangle$ and a world $x \in W$ such that 
$\Mf, x \models A.$
\end{definition}

We now define a class of multi-modal logics, given $\Sigma$, that is
induced by {\em production rules} for strings from $\Sigma^*.$ We follow the
framework in \cite{Demri05}, using semi-Thue systems to define the logics. 
A production rule is a binary relation over strings in $\Sigma^*$, interpreted
as a rewrite rule on strings. We use the notation $u \prod v$ to denote 
a production rule which rewrites $u$ to $v.$
A {\em semi-Thue} system is a set  $S$ of production rules. It is
{\em closed}  if $u \prod v \in S$ implies $\bar u \prod \bar v \in S.$ 

Given a $\Sigma$-frame $\langle W, R\rangle$, 
we define another family of accessibility relations indexed by $\Sigma^*$
as follows: 
$R_\epsilon = \{(x,x) \in x \in W\}$ and 
for every $u\in \Sigma^*$ and for every $a \in \Sigma$, 
$R_{ua} = \{(x,y) \mid (x,z) \in R_u, (z,y)\in R_a, \mbox{ for some $z \in W$} \}.$

\begin{definition}
\label{def:S-frame}
Let $u \prod v$ be a production rule and let $\Fc = \langle W, R \rangle$ be a
$\Sigma$-frame. $\Fc$ is said to satisfy $u \prod v$ if 
$R_v \subseteq R_u.$ $\Fc$ satisfies a semi-Thue system $S$ if it
satisfies every production rule in $S.$
\end{definition}

\begin{definition}
Let $S$ be a semi-Thue system. A formula $A$ is said to be 
{\em $S$-satisfiable} iff there is a model $\Mf = \langle W , R , V\rangle$
such that $\langle W, R\rangle$ satisfies $S$ and 
$\Mf, x \models A$ for some $x \in W.$
$A$ is said to be {\em $S$-valid} if for every $\Sigma$-model 
$\Mf = \langle W, R, V\rangle$ that satisfies $S$, we have $\Mf, x \models A$
for every $x \in W.$
\end{definition}

Given a string $u = a_1 a_2 \ldots a_n$ and a formula $A$, we write $\odia u A$
for the formula $\odia {a_1} \odia {a_2} \cdots \odia{a_n} A.$
The notation $\obox u A$ is defined analogously.
If $u = \epsilon$ then $\odia u A = \obox u A = A.$

\begin{definition}
Let $S$ be a closed semi-Thue system over an alphabet $\Sigma$. The system $\Km(S)$
is an extension of the standard Hilbert system for multi-modal $\Km$ 
(see, e.g., \cite{blackburn07handbook})
with the following axioms: 
\begin{itemize}
\item for each $a \in \Sigma$, a {\em residuation axiom:}
$
A \impl \obox a \odia {\bar a} A
$
\item and for each $u \prod v \in S$, an axiom
$
\obox u A \impl \obox v A.
$
\end{itemize}
\end{definition}
Note that because we assume that $S$ is closed, each 
axiom $\obox u A \impl \obox v A$ has an {\em inverted version}
$\obox {\bar u} A \impl \obox {\bar v} A.$

The following theorem can be proved following a similar 
soundness and completeness proof for Hilbert systems for 
modal logics (see, e.g., \cite{blackburn07handbook}). 
\begin{theorem}
\label{thm:Km-S}
A formula $F$ is $S$-valid iff $F$ is provable in $\Km(S).$
\end{theorem}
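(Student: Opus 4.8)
The plan is to prove the soundness and completeness of the Hilbert system $\Km(S)$ with respect to $S$-validity by adapting the standard canonical model construction for multi-modal logics, paying special attention to the two families of non-standard axioms.

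For the \emph{soundness} direction (provable implies $S$-valid), I would proceed by induction on the length of the derivation in $\Km(S)$. The base cases are the axioms: the propositional tautologies and the $\Km$ axioms $\obox a (A \impl B) \impl (\obox a A \impl \obox a B)$ are valid in any $\Sigma$-model by the standard argument, and the inference rules (modus ponens and necessitation) preserve validity as usual. The two new families of axioms require the frame conditions. The residuation axiom $A \impl \obox a \odia{\bar a} A$ is valid precisely because of the frame condition $R_a = \{(x,y) \mid R_{\bar a}(y,x)\}$: if $\Mf,x\models A$ and $R_a(x,y)$, then $R_{\bar a}(y,x)$ holds, so $x$ witnesses $\odia{\bar a}A$ at $y$. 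For the production axiom $\obox u A \impl \obox v A$ associated with $u\prod v\in S$, I would first establish by induction on the length of $u$ the auxiliary fact that $\Mf,x\models\obox u A$ iff for every $y$ with $R_u(x,y)$ we have $\Mf,y\models A$ (matching the inductive definition of $R_u$ given just before Definition~\ref{def:S-frame}); the axiom then follows directly from the frame condition $R_v\subseteq R_u$ in Definition~\ref{def:S-frame}, since every $v$-successor is a $u$-successor.

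For the \emph{completeness} direction ($S$-valid implies provable), I would contrapose and build the canonical model $\Mf^c=\langle W^c,R^c,V^c\rangle$, where $W^c$ is the set of maximal $\Km(S)$-consistent sets of formulae, $V^c(p)=\{x\mid p\in x\}$, and $R^c_a=\{(x,y)\mid \obox a A\in x \text{ implies } A\in y\}$. The Lindenbaum lemma and the Truth Lemma ($\Mf^c,x\models A$ iff $A\in x$) go through by the standard arguments. The substantive work is to verify that the canonical frame actually satisfies $S$ and the converse conditions, i.e.\ that $\Mf^c$ is a genuine $\Sigma$-model satisfying $S$. The converse condition $R^c_a=\{(x,y)\mid R^c_{\bar a}(y,x)\}$ follows from the residuation axioms together with their symmetric counterparts (using $\bar{\bar a}=a$), via the familiar argument that in the presence of $A\impl\obox a\odia{\bar a}A$ the canonical relation for $a$ is the converse of that for $\bar a$. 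To show $R^c_v\subseteq R^c_u$ for each $u\prod v\in S$, I would again lift the analysis to strings: by induction one shows $R^c_u=\{(x,y)\mid \obox u A\in x \text{ implies } A\in y\}$, and then the axiom $\obox u A\impl\obox v A$ forces any $v$-related pair to be $u$-related.

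The main obstacle is the string-indexed relation $R^c_u$ and its interaction with the closure condition on $S$. Getting the inductive characterization of $R^c_u$ correct requires care, because the definition of $R_{ua}$ composes relations and the canonical-model argument must match this composition with the box-stacking $\obox u A=\obox{a_1}\cdots\obox{a_n}A$; the delicate point is verifying the inductive step, where one must use that $\Km(S)$ proves enough about nested boxes. The assumption that $S$ is closed under $u\prod v\mapsto\bar u\prod\bar v$ is what guarantees that the frame conditions and the converse conditions are mutually compatible, so I would make explicit where closedness is invoked. Since the excerpt states this theorem follows a standard soundness and completeness proof, I would keep the routine canonical-model machinery brief and concentrate the exposition on these two string-lifting lemmas and the role of the closure assumption.
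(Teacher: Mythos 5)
Your proposal is correct and follows essentially the same route as the paper: a canonical model over maximal $\Km(S)$-consistent sets, with the only non-routine verifications being that the residuation axioms force $R_a$ to be the converse of $R_{\bar a}$ and that the production axioms force $R_v\subseteq R_u$ (the latter the paper outsources to Baldoni's thesis, while you sketch it explicitly via the string-lifted characterisation of $R^c_u$, correctly flagging the compound-modality existence lemma as the delicate step).
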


\section{Nested sequent calculi with shallow inference}
\label{sec:skm}

We now give a sequent calculus for $\Km(S)$, by using the
framework of nested sequent calculus~\cite{kashima-cut-free-tense,Brunnler09AML,Gore09tableaux,Gore11LMCS}. 
We follow the notation used in \cite{kashima-cut-free-tense,Gore11LMCS}, 
extended to the multi-modal case. From this section onwards, 
we shall be concerned only with formulae in nnf, so we can restrict to one-sided
sequents. 

A nested sequent is a multiset of the form shown below at left
$$
A_1,\dots,A_m, \seq {a_1} {\Delta_1} ,\dots, \seq {a_n} {\Delta_n}
\quad\quad\quad
A_1 \lor \cdots \lor A_m \lor \obox {a_1} B_1 
\lor \cdots \lor \obox{a_n} B_n
$$
where each $A_i$ is a formula and each $\Delta_i$ is a nested sequent. 
The structural connective $\seq {a} {.}$ is a proxy for the modality
 $\obox {a}$,
so this nested sequent can be interpreted as the formula shown above
right (modulo associativity and commutativity of $\lor$),
where each $B_i$ is the interpretation of $\Delta_i.$
We shall write $\seq u \Delta$, where $u = a_1 \cdots a_n \in \Sigma^*$, to denote the structure:
$$
\seq {a_1} {\seq {a_2} {\cdots \seq {a_n} \Delta} \cdots }.
$$
A {\em context} is a nested sequent with a `hole' $[~]$ in place of a
formula: this notation should not be confused with the modality $\obox
a$. We use $\Gamma[~]$, $\Delta[~]$, etc.\ for contexts.
Given a context $\Gamma[~]$ and a nested sequent $\Delta$, we write
$\Gamma[\Delta]$ to denote the nested sequent obtained by replacing the hole
in $\Gamma[~]$ with $\Delta.$ 

The core inference rules for multi-modal $\SKm$ (without axioms)
are given in Figure~\ref{fig:SKm}. The rule $r$ is called a {\em residuation rule}
(or display postulate) and corresponds to the residuation axioms.

\begin{figure}[t]
  \begin{tabular}[c]{cc}
  \begin{tabular}[c]{ccccc}
 $\infer[\mathit{id}]{\Gamma, p, \neg p}{}$ &
 $\infer[\mathit{cut}]{\Gamma, \Delta}{\Gamma, A & \Delta, \lneg A}$
 & $\infer[\mathit{ctr}]{\Gamma, \Delta}{\Gamma, \Delta, \Delta}$
 & $\infer[\mathit{wk}]{\Gamma, \Delta}{\Gamma}$
 & $\infer[\mathit{r}]{\seq {\bar a} \Gamma, \Delta}{\Gamma, \seq a \Delta}$
  \end{tabular}
\\[2em]
  \begin{tabular}[c]{cccc}
  $\infer[\land]{\Gamma, A \land B}{\Gamma, A & \Gamma, B}$
 & $\infer[\lor]{\Gamma, A \lor B}{\Gamma, A, B}$
 & $\infer[\obox a]{\Gamma, \obox a A}{\Gamma, \seq a A}$
 & $\infer[\odia a]{\Gamma, \seq a \Delta, \odia a A}
                   {\Gamma, \seq a {\Delta, A}}$
  \end{tabular}
  \end{tabular}
\caption{The inference rules of $\SKm$}
\label{fig:SKm}
\end{figure}

To capture $\Km(S)$, we need to convert each axiom generated from $S$ to
an inference rule. 
Each production rule $u \prod v$ gives rise to the 
axiom $\obox u A \impl \obox v A$, or equivalently,
$\odia {\bar v} A \impl \odia {\bar u} A.$
The latter is an instance of the Kracht's {\em primitive axioms}~\cite{Kra96} (generalised
to the multimodal case).  
Thus, we can convert the axiom into a structural rule
following Kracht's rule scheme for primitive axioms:
$$\infer[]{\seq v {\Delta}, \Gamma}{\seq u {\Delta}, \Gamma}$$
Let $\rho(S)$ be the set of structural rules induced by the
semi-Thue system $S.$

\begin{definition}
Let $S$ be a closed semi-Thue system $S$ over an alphabet $\Sigma$. 
$\SKm(S)$ is the proof system obtained by
extending $\SKm$ with $\rho(S).$
\end{definition}

We say that two proof systems are equivalent if and only if
they prove the same set of formulae. 
\begin{theorem}
\label{thm:SKm-Km-equiv}
The system $\SKm(S)$ and $\Km(S)$ are equivalent. 
\end{theorem}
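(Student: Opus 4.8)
The plan is to prove equivalence of $\SKm(S)$ and $\Km(S)$ by establishing the two inclusions separately, using the semantic characterisation from Theorem~\ref{thm:Km-S} as the bridge. Since $\Km(S)$ proves exactly the $S$-valid formulae, it suffices to show that $\SKm(S)$ is sound and complete with respect to the class of $\Sigma$-models satisfying $S$. Thus I would first fix the translation between nested sequents and formulae (already given in the excerpt, via the structural connective $\seq{a}{.}$ as a proxy for $\obox a$) and prove the two directions as (i) soundness: every formula provable in $\SKm(S)$ is $S$-valid, and (ii) completeness: every $S$-valid formula is provable in $\SKm(S)$.

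For soundness, I would argue by induction on the height of the $\SKm(S)$ derivation. The key is to interpret each nested sequent as a formula and show that each inference rule preserves $S$-validity. The core rules of $\SKm$ (Figure~\ref{fig:SKm}) are the standard one-sided nested sequent rules, so their soundness is routine; the residuation rule $r$ is sound precisely because of the frame condition $R_a = \{(x,y) \mid R_{\bar a}(y,x)\}$, which validates the residuation axioms $A \impl \obox a \odia {\bar a} A$. The only genuinely new rules are those in $\rho(S)$: for each $u \prod v \in S$, the structural rule has conclusion $\seq v \Delta, \Gamma$ and premise $\seq u \Delta, \Gamma$. Interpreting $\seq u \Delta$ as $\obox u B$ (where $B$ interprets $\Delta$) and $\seq v \Delta$ as $\obox v B$, soundness of this rule amounts to the implication $\obox u B \impl \obox v B$ holding in every frame satisfying $u \prod v$. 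By Definition~\ref{def:S-frame} this frame condition is exactly $R_v \subseteq R_u$, and a short semantic argument shows $R_v \subseteq R_u$ forces $\obox u B \impl \obox v B$, giving soundness of each rule in $\rho(S)$.

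For completeness, the cleanest route is to simulate the Hilbert system $\Km(S)$ inside $\SKm(S)$: I would show that every axiom of $\Km(S)$ is derivable as a nested sequent in $\SKm(S)$ and that the Hilbert inference rules (modus ponens and necessitation) can be mimicked. Modus ponens is simulated using the $\mathit{cut}$ rule, and necessitation using the $\obox a$ rule together with weakening. The propositional tautologies and the $\mathrm K$ axioms follow from the core $\SKm$ rules in the usual way. The residuation axioms $A \impl \obox a \odia{\bar a} A$ are derivable using the residuation rule $r$. Finally, for each production axiom $\obox u A \impl \obox v A$ arising from $u \prod v \in S$, I would build a derivation whose final step is the corresponding structural rule in $\rho(S)$, applied to the identity sequent for $A$ transported through the $\seq u \cdot$ and $\seq v \cdot$ structures.

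I expect the main obstacle to be the completeness direction, specifically the bookkeeping needed to derive each production axiom $\obox u A \impl \obox v A$ from the structural rule whose premise and conclusion are stated for arbitrary $\Delta$ and $\Gamma$ rather than for the bare modal formulae. One must correctly instantiate $\Delta$ with a single formula $A$, generate $\seq u A$ from $\obox u A$ (peeling the boxes via the $\obox a$ rule applied $|u|$ times, interleaved appropriately so the structure $\seq {a_1}{\seq{a_2}{\cdots}}$ is assembled) and dually reconstitute $\obox v A$ from $\seq v A$, all while carrying the negation-normal-form side formulae. Because we work with one-sided sequents in nnf, the directionality of the rule (rewriting $\seq u \Delta$ to $\seq v \Delta$) must be matched against the direction of the axiom $\obox u A \impl \obox v A$; verifying this orientation agrees, and that the closed-ness of $S$ supplies the inverted versions needed to keep the derivations symmetric, is the delicate part.
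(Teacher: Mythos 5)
Your proposal is correct and follows essentially the same route as the paper: one direction by showing each $\SKm(S)$ rule preserves $S$-validity of the formula interpretation (then invoking Theorem~\ref{thm:Km-S}), and the other by deriving the axioms of $\Km(S)$ in $\SKm(S)$ --- with the production axioms obtained exactly as you describe, by peeling boxes via $\obox a$, shuttling structures with the residuation rule $r$, and applying the $\rho(S)$ rule (where, as you correctly flag, closedness of $S$ is needed because the instance actually used is the one for $\bar u \prod \bar v$). The paper's appendix proof is the same outline, illustrated with a sample derivation of $\obox a\obox b\, p \impl \obox c\obox d\, p$.
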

The cut-elimination proof for $\SKm(S)$ follows a similar generic procedure
for display calculi~\cite{Belnap82JPL,Kra96}, which has been 
adapted to nested sequent in~\cite{Gore11LMCS}.
The key to cut-elimination is to show that $\SKm(S)$ has the {\em display property}.

\begin{lemma}
\label{lm:display}
Let $\Gamma[\Delta]$ be a nested sequent. Then there exists a nested sequent $\Gamma'$
such that $\Gamma[\Delta]$ is derivable from the nested sequent $\Gamma',\Delta$, and vice versa,
using only the residuation rule $r.$
\end{lemma}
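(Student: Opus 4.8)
The plan is to prove the statement by induction on the nesting depth of the hole in the context $\Gamma[~]$, using at each step a single application of the residuation rule $r$ to ``peel off'' one layer of nesting. The starting observation is that $r$ is invertible using $r$ itself: since $\bar{\bar a} = a$, applying $r$ to its own conclusion $\seq{\bar a}\Gamma, \Delta$ (reading $\seq{\bar a}\Gamma$ as the distinguished boxed structure of index $\bar a$ and $\Delta$ as the remaining top-level material) yields back $\seq a \Delta, \Gamma$, i.e.\ $\Gamma, \seq a \Delta$. Hence every forward $r$-step has a matching backward $r$-step, and it suffices to exhibit a derivation of $\Gamma', \Delta$ from $\Gamma[\Delta]$ using only $r$; the ``vice versa'' direction then comes for free by running the same sequence of steps in reverse.

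For the base case, suppose the hole sits at the top level, so $\Gamma[~] = \Pi_0, [~]$ for some nested sequent $\Pi_0$; then $\Gamma[\Delta] = \Pi_0, \Delta$ and we simply take $\Gamma' = \Pi_0$, with no rule application required. For the inductive step, the hole lies strictly inside exactly one top-level boxed structure, so we may write $\Gamma[~] = \Pi_0, \seq a {\Gamma_1[~]}$, where $\Pi_0$ collects all remaining top-level material and $\Gamma_1[~]$ is a context whose hole has smaller depth. Applying $r$ to $\Gamma[\Delta] = \Pi_0, \seq a {\Gamma_1[\Delta]}$ --- with $\Pi_0$ in the role of $\Gamma$ and $\Gamma_1[\Delta]$ in the role of $\Delta$ in the rule --- produces $\seq{\bar a}{\Pi_0}, \Gamma_1[\Delta]$. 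This is exactly $\Gamma_2[\Delta]$ for the new context $\Gamma_2[~] = \seq{\bar a}{\Pi_0}, \Gamma_1[~]$, whose hole has the same (smaller) depth as in $\Gamma_1[~]$, the pushed-in material $\seq{\bar a}{\Pi_0}$ having become a mere top-level sibling. The induction hypothesis applied to $\Gamma_2[~]$ then yields a $\Gamma'$ with $\Gamma_2[\Delta]$ derivable from $\Gamma', \Delta$ by $r$ alone, and chaining this with the single $r$-step above gives the required derivation of $\Gamma[\Delta]$ from $\Gamma', \Delta$.

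The key points to get right, rather than a genuine obstacle, are two bookkeeping facts. First, $r$ acts on the entire top-level multiset, so a single application necessarily sweeps all non-hole top-level material $\Pi_0$ into the new box $\seq{\bar a}{\Pi_0}$; this is precisely why the lemma must be phrased for an arbitrary context (permitting arbitrary sibling material) rather than for a bare string of modalities, so that the induction hypothesis still applies after the extra box has appeared. Second, the chosen measure, the nesting depth of the hole, strictly decreases under one application of $r$ while only the top-level sibling context grows, so the induction is well-founded and every intermediate sequent remains within the scope of the statement. With these two facts secured, the argument is routine, and the invertibility of $r$ noted at the outset delivers the converse direction simultaneously.
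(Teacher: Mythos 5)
Your proof is correct. The paper states Lemma~\ref{lm:display} without giving an explicit proof (deferring to the standard display-property argument adapted from the tense-logic case), and your argument --- induction on the depth of the hole, peeling off one box per application of $r$ while sweeping the sibling material $\Pi_0$ into $\seq{\bar a}{\Pi_0}$, together with the observation that $r$ is self-inverse because $\bar{\bar a}=a$ --- is exactly that standard argument.
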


\begin{theorem}
Cut elimination holds for $\SKm(S).$
\end{theorem}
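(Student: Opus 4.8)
The plan is to follow Belnap's generic cut-elimination argument for display calculi~\cite{Belnap82JPL}, in the form adapted to nested sequents in~\cite{Gore11LMCS}. The essence of that method is that cut-elimination is guaranteed once the calculus is shown to satisfy a fixed list of syntactic conditions (Belnap's C1--C8), of which the only substantial ones are the display property and the reducibility of principal cuts. First I would observe that Lemma~\ref{lm:display} already supplies the display property: any constituent buried inside a context $\Gamma[~]$ can be brought to the top level using only the residuation rule $r$, and this move is reversible. This immediately discharges the condition requiring display of principal constituents.

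Next I would organise the elimination around a standard double induction: the outer induction on the complexity of the cut formula $A$, and the inner induction on the sum of the heights of the two derivations of the premises $\Gamma, A$ and $\Delta, \lneg A$. There are two families of cases. When $A$ is principal in both premises --- for instance $A = \obox a B$ introduced by the $\obox a$ rule against $\lneg A = \odia a \lneg B$ introduced by the $\odia a$ rule --- I would carry out the usual key reduction, using residuation to display $B$ and $\lneg B$ at the top level and then replacing the cut on $\obox a B$ by a cut on the strictly smaller $B$, with the $\bar a$-brackets undone afterwards by the reverse residuation derivable from Lemma~\ref{lm:display}; the propositional cases ($\land$, $\lor$, $\mathit{id}$) are entirely routine. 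When $A$ is parametric in one premise, I would use the display property to display that occurrence of $A$, permute the cut upward past the rule immediately below it, and appeal to the inner induction hypothesis.

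The structural rules $\rho(S)$ must then be checked against the parameter conditions C2--C4. These rules have the shape $\infer{\seq v \Delta, \Gamma}{\seq u \Delta, \Gamma}$: they introduce no logical connective, they copy the parameters $\Delta$ and $\Gamma$ faithfully from premise to conclusion (shape- and position-alikeness), and every formula in the premise already occurs in the conclusion. Hence they are transparent to the principal-cut reductions and contribute only to the parametric case, where a cut has to be pushed above an application of some $\rho(S)$ rule. Here the closure of $S$ under $\bar{\cdot}$ is what guarantees that, once the residuation steps needed to display the cut formula are carried out, the appropriate structural rule --- possibly its inverted version $\bar u \prod \bar v$ --- is still available to reconstruct the derivation above the cut.

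The step I expect to be the main obstacle is precisely this last interaction: permuting a cut upward through the $\rho(S)$ rules while the cut formula is parametric. Because $\seq u \Delta$ abbreviates a whole tower of nested brackets, displaying the cut formula may require threading the residuation rule through several layers, and one has to verify that in every resulting configuration the relevant structural rule can indeed be reapplied and that the permutation strictly decreases the inner induction measure. Making this bookkeeping precise, and confirming that no case escapes Belnap's conditions once $S$ is closed, is where the real work lies; the remaining propositional and modal reductions are standard.
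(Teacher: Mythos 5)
Your proposal is correct and follows essentially the same route as the paper, which simply invokes the generic Belnap-style cut-elimination argument as adapted to nested sequents in~\cite{Gore11LMCS}, with Lemma~\ref{lm:display} supplying the crucial display property and the rules $\rho(S)$ (being instances of Kracht's primitive-axiom rule scheme) satisfying the required conditions on parameters. You supply considerably more detail than the paper's one-line proof, but the underlying strategy is identical.
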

\begin{proof}
This is a straightforward adaptation of the cut-elimination proof in \cite{Gore11LMCS} for tense logic. 
\end{proof}

\section{Deep inference calculi}
\label{sec:dkm}

Although the shallow system $\SKm(S)$ enjoys cut-elimination, 
proof search in its cut-free fragment is difficult to automate,
due to the presence of structural rules. 
To reduce the non-determinism caused by structural rules, we consider
next a proof system in which all structural rules (including those
induced by grammar axioms) can be absorbed into logical rules. 
As the display property in Lemma~\ref{lm:display} suggests, the residuation rule
allows one to essentially apply an inference rule to a particular
subsequent nested inside a nested sequent, by displaying 
that subsequent to the top and undisplaying it back to its original
position in the nested sequent. 
It is therefore quite intuitive that one way to get rid of 
the residuation rule is to allow {\em deep inference} rules, that apply
deeply within any arbitrary context in a nested sequent.

The deep inference system $\DKm$, which corresponds to $\SKm$, is given
in Figure~\ref{fig:DKm}. As can be readily seen, the residuation rule
is absent and contraction and weakening are absorbed into logical rules. 

\begin{figure}[t]
$$
\begin{array}{c@{\quad}c@{\quad}c}
\infer[\idd]
{\Gamma[p,\neg p]}{}
&
\infer[\landd]
{\Gamma[A \land B]}
{\Gamma[A \land B, A] & \Gamma[A \land B, B]}
&
\infer[\lord]
{\Gamma[A \lor B]}
{\Gamma[A \lor B, A, B]}
\\ \\ 
\infer[\boxrd a]
{\Gamma[\obox a A]}{\Gamma[\obox a A, \seq a A]}
&
\infer[\diarup{a}]
{\Gamma[\seq a {\Delta}, \odia a A]}
{\Gamma[\seq a {\Delta, A}, \odia a A]}
&
\infer[\diardn{a}]
{\Gamma[\seq a {\Delta, \odia {\bar a} A}]}
{\Gamma[\seq a {\Delta, \odia {\bar a} A}, A]}
\end{array}
$$
\caption{The inference rules of $\DKm$}
\label{fig:DKm}
\end{figure}

To fully absorb the residuation rule, and other structural rules
induced by production rules, we need to modify the introduction
rules for diamond-formulae. Their introduction rules will be
dependent on what axioms one assumes. We refer to these
introduction rules for diamond-formulae as {\em propagation rules}. 
This will be explained shortly, but first we need to define
a couple of notions needed to define propagation rules.

Let $S$  be a closed semi-Thue system over alphabet $\Sigma$. 
We write $u \prods_S v$
to mean that the string
$v$
can be reached from $u$ by applying the production rules
(as rewrite rules) in $S$ successively to $u.$
Define $L_a(S) = \{ u \mid a \prods_S u \}.$ Then $L_a(S)$ defines
a language generated from $S$ with the start symbol $a.$

A nested sequent can be seen as a tree whose nodes are multisets of
formulae, and whose edges are labeled with elements of $\Sigma.$ We
assume that each node in a nested sequent can be identified uniquely,
i.e., we can consider each node as labeled with a unique position
identifier. An internal node of a nested sequent is a node which is
not a leaf node. We write $\Gamma[~]_i$ to denote a context in which the
hole is located in the node at position $i$ in the tree 
representing $\Gamma[~].$ This generalises to multi-contexts, so
$\Gamma[~]_i[~]_j$ denotes a two-hole context, one hole located at $i$
and the other at $j$ (they can be the same location).  From now on, we
shall often identify a nested sequent with its tree representation, so
when we speak of a node in $\Gamma$, we mean a node in the tree
of $\Gamma.$ If $i$ and $j$ are nodes in $\Gamma$, we write $i \next{a}
j$ when $j$ is a child node of $i$ and the edge from $i$ to $j$ is
labeled with $a.$
If $i$ is a node in the tree of $\Gamma$, we write
$\Gamma|i$ to denote the multiset of formula occuring
in the node $i.$
Let $\Delta$ and $\Gamma$ be nested sequents.
Suppose $i$ is a node in $\Gamma$. 
Then we write 
$\Gamma(i \addtree \Delta)$
for the nested sequent obtained from $\Gamma$ by adding 
$\Delta$ to node $i$ in $\Gamma.$ Note that for such an addition to preserve
the uniqueness of the position identifiers of the resulting tree, we need to
rename the identifiers in $\Delta$ to avoid clashes. We shall assume implicitly that such
a renaming is carried out when we perform this addition.

\begin{definition}[Propagation automaton.]
  A propagation automaton is a finite state automaton
  $\Pcal = (\Sigma,Q, I, F, \delta)$ where $Q$ is a finite set of
  states, $I = \{s\}$ is a singleton set of initial state and $F = \{t\}$ is a 
  singleton set of final
  state with $s,t \in Q$, and for every $i,j \in Q$, if $i
  \trans{a} j \in \delta$ then $j \trans{\bar a} i \in \delta.$
\end{definition}
In other words, a propagation automaton is just a finite state automaton (FSA)
where each transition has a dual transition. 

\begin{definition}
Let $\Acal = (\Sigma, Q, I, F, \delta)$ be a 
FSA. 
Let $\vec i = i_1,\dots,i_n$ and $\vec j = j_1,\dots,j_n$ be two sequences of states in $Q.$
Let $[i_1 \assign j_1,\dots, i_n \assign j_n]$ (we shall abbreviate this as $[\vec i \assign \vec j]$)
be a (postfix) mapping from $Q$ to $Q$ that maps 
$i_m$ to $j_m$, where $1 \leq m \leq n$, and is the identity map otherwise. 
This mapping is extended to a (postfix) mapping between sets of states 
as follows: given $Q' \subseteq Q$,
$Q'[\vec i \assign \vec j] = \{k[\vec i \assign \vec j] \mid k \in Q'\}.$
The automaton $\Acal[\vec i \assign \vec j]$ is the 
tuple $(\Sigma, Q[\vec i \assign \vec j], I[\vec i \assign \vec j], F[\vec i \assign \vec j], \delta')$
where 
$$
\delta' = \{k[\vec i \assign \vec j] \trans{a} l[\vec i \assign \vec j] \mid k \trans{a} l \in \delta \}.
$$
\end{definition}

To each nested sequent $\Gamma$, and nodes $i$ and $j$ in $\Gamma$,
we associate a propagation automaton $\Rcal(\Gamma,i,j)$ 
as follows: 
\begin{enumerate}
\item the states of $\Rcal(\Gamma,i,j)$ are the nodes of (the tree of) $\Gamma$;
\item $i$ is the initial state of $\Rcal(\Gamma,i,j)$ and $j$ is its final
  state;
\item each edge $x \next{a} y$ in $\Gamma$ corresponds to two transitions in  
$\Rcal(\Gamma,i,j)$: 
$
x \trans{a} y$
and
$y \trans{\bar a} x.$
\end{enumerate}

Note that although propagation automata are defined for nested sequents, they can be
similarly defined for (multi-)contexts as well, as contexts are just sequents containing 
a special symbol $[~]$ denoting a hole. So in the following, we shall often treat a 
context as though it is a nested sequent. 

A semi-Thue system $S$ over alphabet $\Sigma$ is {\em context-free} if its
production rules
are all of the form $a \to u$ for some $a \in \Sigma.$

In the following, to simplify presentation, we shall use the same
notation to refer to an automaton $\Acal$ 
and the regular language it accepts.
Given a context-free closed semi-Thue system $S$, the {\em propagation rules for
$S$} are all the rules of the following form
where
$i$ and $j$ are two  (not necessarily distinct) nodes of $\Gamma$:
$$
\infer[p_S, \mbox{ provided } {\Rcal(\Gamma[~]_i[~]_j, i, j) \cap L_a(S) \not = \emptyset.}]
{\Gamma[\odia a A]_i[\emptyset]_j}
{\Gamma[\odia a A]_i[A]_j}
$$ 
Note that the intersection of a regular language and 
a context-free language is a context-free language (see, e.g., Chapter 3 in \cite{ginsburg} for
a construction of the intersection),
and since the emptiness checking for context-free languages
is decidable~\cite{ginsburg}, the rule $p_S$ can be effectively
mechanised. 

\begin{definition}
Given a context-free closed semi-Thue system $S$ over
an alphabet $\Sigma$,
the proof system $\DKm(S)$ is  obtained by
extending $\DKm$ with $p_S.$
\end{definition}

We now show that $\DKm(S)$ is equivalent to $\SKm(S).$
The proof relies on a series of lemmas showing admissibility of
all structural rules of $\SKm(S)$ in $\DKm(S).$ The proof follows
the same outline as in the case for tense logic~\cite{Gore11LMCS}. 
The adaptation of the proof in \cite{Gore11LMCS} is quite straightforward,
so we shall not go into detailed proofs but instead just outline the required 
lemmas. Some of their proofs are outlined in the appendix. In the following
lemmas, we shall assume that $S$ is a closed context-free semi-Thue system over
some $\Sigma.$

Given a derivation $\Pi$, we denote with $|\Pi|$ the {\em height} of $\Pi$, i.e.,
the length (i.e., the number of edges) of the longest branch in $\Pi.$
A rule $\rho$ is said to be {\em admissible} in $\DKm(S)$ if 
provability of its premise(s) in $\DKm(S)$ implies provability of
its conclusion in $\DKm(S).$ 
It is {\em height-preserving admissible} if whenever the premise has
a derivation then the conclusion has a derivation of the same height,
in $\DKm(S).$

Admissibility of the weakening rule is a consequence of the following lemma.
\begin{lemma}
\label{lm:weak}
Let $\Pi$ be a derivation of $\Gamma[\emptyset]$ in $\DKm(S).$
Then there exists a derivation $\Pi'$ of $\Gamma[\Delta]$ in $\DKm(S)$ 
such that $|\Pi| = |\Pi'|.$
\end{lemma}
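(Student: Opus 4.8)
The plan is to prove Lemma~\ref{lm:weak} by induction on the height $|\Pi|$ of the derivation of $\Gamma[\emptyset]$. The statement is a height-preserving admissibility of weakening, where we add an arbitrary nested sequent $\Delta$ at the hole position. The key observation that makes this work smoothly is that all rules of $\DKm(S)$ are deep inference rules that operate within a context, and crucially, none of them requires the \emph{absence} of material at any node --- they only inspect or add formulae. So adding extra material $\Delta$ somewhere in the sequent cannot block any rule application.

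First I would set up the induction. In the base case, $\Pi$ consists of a single application of $\idd$, so $\Gamma[\emptyset]$ has the form $\Gamma'[p,\neg p]$ for some node containing both $p$ and $\neg p$. Adding $\Delta$ at the hole still leaves that complementary pair intact (the identity axiom only needs $p$ and $\neg p$ to be present at \emph{some} node), so $\Gamma[\Delta]$ is again an instance of $\idd$, giving a derivation of height $0$. For the inductive step, I would do a case analysis on the last rule applied in $\Pi$. For each logical rule ($\landd$, $\lord$, $\boxrd a$, $\diarup a$, $\diardn a$) and for the propagation rule $p_S$, the rule acts at some node, and the hole where we insert $\Delta$ is at a possibly different node. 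The strategy in each case is to push the weakening up into the premises: apply the induction hypothesis to the derivation(s) of the premise(s) to insert $\Delta$ at the same position, then reapply the same rule.

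The main technical point to check in the inductive step is that inserting $\Delta$ at the hole does not interfere with the \emph{side condition} of the rule, nor with the principal structure the rule acts upon. For the purely logical rules this is immediate, since they only manipulate the multiset of formulae at a single node and $\Delta$ either sits at an unrelated node or is simply carried along as part of the context. The case requiring genuine care is the propagation rule $p_S$, whose applicability depends on the side condition $\Rcal(\Gamma[~]_i[~]_j,i,j)\cap L_a(S)\neq\emptyset$. Here I expect the main obstacle: I must verify that adding $\Delta$ at the weakening position can only \emph{enlarge} the tree underlying the propagation automaton, hence can only add states and transitions, and therefore can only enlarge the accepted language $\Rcal(\Gamma[~]_i[~]_j,i,j)$. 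Since the intersection with $L_a(S)$ was already nonempty before weakening, it remains nonempty afterward, so the side condition is preserved and the rule reapplies. One subtlety is that if the hole for $\Delta$ coincides with one of the propagation endpoints $i$ or $j$, or lies on the path between them, we must confirm the relevant accepting path in the automaton survives; but since we only add material and never delete edges, every previously accepting run still exists.

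Finally, in every case the reapplication of the rule adds exactly one inference to a derivation whose premises, by the induction hypothesis, have the same height as the original premises. Hence $|\Pi'| = |\Pi|$, establishing height-preservation and completing the induction. The overall argument is routine precisely because the deep-inference formulation was designed so that weakening is transparent to all rules; the only nontrivial verification is the monotonicity of the propagation side condition under adding nodes, which follows from the monotone construction of $\Rcal(\Gamma,i,j)$ from the tree of $\Gamma$.
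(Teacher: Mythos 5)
Your proof is correct and is exactly the argument the paper intends: the paper omits the proof of Lemma~\ref{lm:weak} as a routine adaptation of the tense-logic case, but your induction on $|\Pi|$, with the only nontrivial check being that adding material can only enlarge the propagation automaton $\Rcal(\Gamma[~]_i[~]_j,i,j)$ and hence preserve the nonemptiness of its intersection with $L_a(S)$, is precisely the ``propagation paths are preserved'' property the paper identifies as the key to all these admissibility lemmas. No gaps.
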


The admissibility proofs of the remaining structural rules all follow the same pattern: the most
important property to prove is that, if a propagation path for a
diamond formula exists between two nodes in the premise, 
then there exists a propagation path for the same formula, between the same nodes,
in the conclusion of the rule.

\begin{lemma}
\label{lm:adm-r}
The rule $r$ is height-preserving admissible in $\DKm(S).$
\end{lemma}
Admissibility of contraction is proved indirectly by
showing that it can be replaced by
a formula contraction rule and a distributivity rule:
$$
\infer[\mathit{actr}]
{\Gamma[A]}
{\Gamma[A,A]}
\qquad
\infer[\mathit{m}]
{\Gamma[\seq a {\Delta_1,\Delta_2}]}
{\Gamma[\seq a {\Delta_1}, \seq a {\Delta_2}]}
$$
The rule $m$ is also called a {\em medial} rule and is typically used
to show admissibility of contraction in deep inference~\cite{Brunnler01LPAR}. 

\begin{lemma}
The rule $\mathit{ctr}$ is admissible in 
$\DKm(S)$ plus $\mathit{actr}$ and $m.$
\end{lemma}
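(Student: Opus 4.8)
The plan is to reduce the general contraction rule $\mathit{ctr}$, which duplicates an arbitrary nested sequent $\Delta$, to the two simpler rules $\mathit{actr}$ (which contracts a single formula) and $m$ (the medial rule, which merges two structurally identical bracketed contexts). First I would argue that it suffices to prove admissibility of a \emph{deep} contraction rule of the form $\Gamma[\Delta,\Delta] \mapsto \Gamma[\Delta]$, since the shallow $\mathit{ctr}$ is the special case where the hole is at the root; in the deep-inference setting $\DKm(S)$ all rules already act inside an arbitrary context $\Gamma[~]$, so working with $\Gamma[\Delta,\Delta]$ costs nothing extra.

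The main argument would proceed by induction on the structure of the duplicated nested sequent $\Delta$. In the base cases, $\Delta$ is a single formula $A$, and then $\Gamma[A,A]$ is contracted to $\Gamma[A]$ directly by one application of $\mathit{actr}$. For the inductive step, the key observation is that a nested sequent is a multiset built from formulae and bracketed subsequents $\seq a {\Delta'}$, so I would split off the components of $\Delta$ one at a time. When two copies of a bracketed structure $\seq a {\Delta'}$ appear side by side, the medial rule $m$ lets me merge them: from $\Gamma[\seq a {\Delta'}, \seq a {\Delta'}]$ we obtain $\Gamma[\seq a {\Delta', \Delta'}]$, which replaces a duplication \emph{outside} the bracket by a duplication \emph{inside} it. The inner duplication $\Delta', \Delta'$ is structurally smaller than the original $\Delta$ and sits in a strictly deeper context, so the induction hypothesis applies to contract it. Interleaving $\mathit{actr}$ on the formula components and $m$ followed by the induction hypothesis on the bracketed components lets me eliminate the full duplication of $\Delta$ componentwise.

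The main obstacle I anticipate is bookkeeping rather than any deep difficulty: I must be careful that the rules $\mathit{actr}$ and $m$ are themselves applicable as genuine deep-inference rules at the right position, and that the induction is set up on a measure that strictly decreases. The natural measure is a lexicographic or multiset ordering on the nesting depth and total size of the duplicated part $\Delta$, since $m$ increases the nesting depth of the duplication but strictly decreases its size, and $\mathit{actr}$ removes a top-level formula outright. A second point of care is that when $\Delta$ contains several top-level components, I should peel them off in sequence, at each step contracting one matched pair (either by $\mathit{actr}$ for a formula or by $m$ plus the induction hypothesis for a bracket) while leaving the rest untouched inside the surrounding context; since every rule in $\DKm(S)$ plus $\mathit{actr}$ and $m$ operates within an arbitrary context $\Gamma[~]$, each such step is legitimate. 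Because the procedure only rewrites a derivation using admissible steps and does not touch the $p_S$ propagation rule at all, no argument about propagation paths is needed here, which keeps this lemma considerably simpler than the admissibility proofs for $r$ and $\mathit{wk}$.
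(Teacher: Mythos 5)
The paper omits the proof of this lemma (deferring to the adaptation of \cite{Gore11LMCS}), but your argument is exactly the standard one intended there: decompose the duplicated nested sequent componentwise, discharge formula pairs with $\mathit{actr}$, merge matching brackets $\seq{a}{\Delta'},\seq{a}{\Delta'}$ into $\seq{a}{\Delta',\Delta'}$ with $m$, and recurse on the strictly smaller $\Delta'$. Your proposal is correct, and you rightly observe that since $\mathit{actr}$ and $m$ are available as rules of the extended system, this yields a derivation of $\mathit{ctr}$ outright, with no analysis of propagation paths needed at this stage.
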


\begin{lemma}
\label{lm:adm-medial}
The rules $\mathit{actr}$ and $m$ are height-preserving admissible in $\DKm(S)$. 
\end{lemma}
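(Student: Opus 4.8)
The plan is to prove Lemma~\ref{lm:adm-medial} by a simultaneous induction on the height of the given derivation, treating the rules $\mathit{actr}$ and $m$ together since applications of one may generate the need for the other (in particular, propagating a diamond formula past a medial step can create duplicated formulae that must be absorbed by $\mathit{actr}$). For each rule I would assume a derivation $\Pi$ of the premise of height $n$ and exhibit a derivation of the conclusion of height at most $n$, by case analysis on the last rule $\rho$ applied in $\Pi$.

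First I would handle $m$. The conclusion $\Gamma[\seq a {\Delta_1,\Delta_2}]$ differs from the premise $\Gamma[\seq a {\Delta_1}, \seq a {\Delta_2}]$ by merging two sibling $a$-subtrees into one. The routine cases are those where the last rule $\rho$ acts entirely inside $\Gamma$, or strictly inside one of the $\Delta_i$, or inside the merged node away from the affected siblings: here $\rho$ commutes with the merge, so I apply the induction hypothesis to the premise(s) of $\rho$ and then reapply $\rho$, keeping the height fixed. The cases where $\rho$ is a $\boxrd a$ or a $\diarup a$ or $\diardn a$ whose principal structural connective is exactly one of the two $\seq a {\cdot}$ being merged require checking that the same logical rule still applies after merging, which it does because the merged node contains the union of the two multisets.

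The genuinely delicate case, and what I expect to be the main obstacle, is the propagation rule $p_S$. Its side condition is the nonemptiness of $\Rcal(\Gamma[~]_i[~]_j, i, j) \cap L_a(S)$, and this automaton is built directly from the tree shape of the sequent. Merging two sibling $a$-edges into one changes the tree and hence changes $\Rcal$, so I must show that any accepting propagation path present before the merge survives after it. Concretely, if a path in the premise tree routes through one of the two merged siblings and back, I need to argue that the corresponding path in the merged tree accepts the same word: the two original $a$-labelled edges and their $\bar a$ converses (recall each edge $x \next{a} y$ contributes both $x \trans{a} y$ and $y \trans{\bar a} x$ to $\Rcal$) collapse onto the single merged edge, and any subpath that previously crossed the boundary between the two siblings must be rerouted within the single merged node. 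Since both siblings are glued at the same parent and the merged node is the union of their contents, every transition used by the old accepting path has a counterpart in the new automaton, so $\Rcal(\Gamma[~]_i[~]_j, i, j)$ after merging still accepts the witnessing word, and the side condition is preserved. I would isolate this ``paths are preserved under merge'' fact as the key propagation-path claim flagged in the paragraph before the lemma, prove it by tracing the effect of the merge on nodes and transitions, and then discharge the $p_S$ case of the induction using it.

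The treatment of $\mathit{actr}$ is analogous but simpler: contracting $A,A$ to $A$ inside a node neither changes the tree shape nor removes any edge, so $\Rcal$ is literally unchanged and the $p_S$ side condition is trivially preserved; the only work is the standard commutation with logical rules, where a duplicated principal formula is handled by appealing to the induction hypothesis on the premises and, where a rule such as $\diarup a$ or $\diardn a$ interacts with the contracted pair, by possibly invoking the height-preserving weakening of Lemma~\ref{lm:weak} to restore multiplicities. Because each case either leaves the derivation height unchanged or replaces a subderivation by one of no greater height supplied by the induction hypothesis, the resulting derivation has height at most $n$, establishing height-preserving admissibility of both rules.
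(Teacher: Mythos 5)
Your proposal is correct and follows essentially the same route as the paper: the only non-routine case is permuting $m$ over $p_S$, and your ``paths are preserved under merge'' claim is exactly the paper's observation that the propagation automaton of the merged sequent is $\Pcal[j \assign i]$, so every accepting word of $\Rcal(\Gamma[~]_i[~]_j,i,j)$ in the premise survives in the conclusion. The extra machinery you anticipate for $\mathit{actr}$ (simultaneous induction, appeals to Lemma~\ref{lm:weak}) is not needed -- the paper dispatches $\mathit{actr}$ as trivial since the tree shape and hence the automaton are unchanged -- but this does not affect correctness.
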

Admissibility of contraction then follows immediately. 
\begin{lemma}
The contraction rule $\mathit{ctr}$ is admissible in $\DKm(S).$
\end{lemma}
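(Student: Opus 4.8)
The plan is to derive admissibility of the general contraction rule $\mathit{ctr}$ by composing the three preceding lemmas, treating it purely as a book-keeping argument at the level of structural rules. The key observation is that the two previous lemmas give us exactly the ingredients we need: the first shows that $\mathit{ctr}$ is admissible once we have $\mathit{actr}$ and $m$ available, and the second shows that $\mathit{actr}$ and $m$ are themselves (height-preserving) admissible in $\DKm(S)$ alone. So the whole proof amounts to chaining these two facts.

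First I would spell out the composition carefully. Suppose $\Gamma[\Delta,\Delta]$ is derivable in $\DKm(S)$ and I wish to derive $\Gamma[\Delta]$. By Lemma~\ref{lm:adm-medial}, both $\mathit{actr}$ and $m$ are admissible in $\DKm(S)$, which means that any derivation that uses these rules as additional primitives can be transformed into a genuine $\DKm(S)$-derivation of the same end-sequent. Consequently, the system $\DKm(S)$ and the system $\DKm(S)$ extended with $\mathit{actr}$ and $m$ prove exactly the same sequents. Then by the lemma showing $\mathit{ctr}$ is admissible in $\DKm(S)$ plus $\mathit{actr}$ and $m$, I obtain a derivation of $\Gamma[\Delta]$ in that extended system, and by the equivalence just noted this yields a derivation of $\Gamma[\Delta]$ in $\DKm(S)$ itself. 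This is exactly the statement of admissibility of $\mathit{ctr}$.

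The one subtlety I would be careful about is the mismatch between ordinary admissibility (as in the $\mathit{ctr}$-lemma) and height-preserving admissibility (as in Lemma~\ref{lm:adm-medial}). The contraction lemma only asserts plain admissibility, so I cannot claim that the final $\DKm(S)$-derivation of $\Gamma[\Delta]$ has any controlled height; but since the statement of the present lemma asks only for admissibility of $\mathit{ctr}$, not height-preserving admissibility, this poses no obstacle. The height-preserving strength of Lemma~\ref{lm:adm-medial} is needed internally (for instance so that the inductive argument eliminating $\mathit{actr}$ and $m$ terminates), but it does not propagate to the conclusion, and I would simply record that the resulting derivation need not preserve height.

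Since there is no genuine induction or case analysis left to perform once the earlier lemmas are in hand, I do not expect any serious obstacle here; the entire content of this lemma is the transitivity of admissibility. The only thing worth stating explicitly is the general principle that if a rule $\rho$ is admissible in a system $\mathcal{X}$ and another rule $\sigma$ is admissible in $\mathcal{X} \cup \{\rho\}$, then $\sigma$ is admissible in $\mathcal{X}$; instantiating this with $\mathcal{X} = \DKm(S)$, $\rho \in \{\mathit{actr}, m\}$, and $\sigma = \mathit{ctr}$ gives the result immediately.
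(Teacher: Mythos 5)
Your proposal is correct and matches the paper's own argument: the paper states that admissibility of $\mathit{ctr}$ ``follows immediately'' from the lemma reducing $\mathit{ctr}$ to $\mathit{actr}$ and $m$ together with Lemma~\ref{lm:adm-medial}, which is precisely the composition you carry out. Your remark that only plain (not height-preserving) admissibility survives the composition is a correct and worthwhile clarification, but it does not change the route.
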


\begin{lemma}
\label{lm:adm-rho-S}
The structural rules $\rho(S)$ of $\SKm(S)$ are height-preserving 
admissible in $\DKm(S).$
\end{lemma}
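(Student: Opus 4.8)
The plan is to show that each structural rule in $\rho(S)$ is height-preserving admissible in $\DKm(S)$ by induction on the height of the given derivation of the premise. Recall that each rule in $\rho(S)$ has the shape
$$
\infer[]{\seq v {\Delta}, \Gamma}{\seq u {\Delta}, \Gamma}
$$
arising from a production $u \prod v \in S$, and that since $S$ is context-free we have $u = a$ for a single letter $a \in \Sigma$, so the rule rewrites a single structural box $\seq a \Delta$ into a string of boxes $\seq v \Delta$. The essential insight, already flagged in the text preceding the lemma, is that admissibility of a propagation-style structural rule reduces to a \emph{path-preservation} property for the propagation automata: applying the rule must not destroy any propagation path needed by the deep diamond rule $p_S$ in the derivation above it. So the real content is to verify that rewriting $\seq a \Delta$ to $\seq v \Delta$ only \emph{adds} edges (and their duals) to the tree, hence can only enlarge the accepted language $\Rcal(\cdot,i,j)$, and moreover that the side condition $a \prods_S v$ guarantees the new path is matched by a corresponding word in $L_b(S)$.

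First I would set up the induction on $|\Pi|$ where $\Pi$ derives the premise $\seq u \Delta, \Gamma$, i.e.\ $\seq a \Delta, \Gamma$. The base case is when the premise is an instance of $\idd$, and one checks that the conclusion $\seq v \Delta, \Gamma$ is then also an axiom (the atoms $p,\neg p$ survive the rewriting untouched). For the inductive step I would case on the last rule $\rho'$ applied in $\Pi$. For all the purely local rules ($\landd$, $\lord$, $\boxrd a$, $\diarup a$, $\diardn a$) the principal formula lies in some node of $\Gamma$ or of $\Delta$, and one simply permutes: apply the induction hypothesis to the premise(s) of $\rho'$, then reapply $\rho'$, obtaining a derivation of the same height. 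These cases are entirely routine because the tree structure touched by the $\rho(S)$ rewrite is disjoint from, or properly contains, the node acted on by $\rho'$.

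The main obstacle is the case where the last rule is the propagation rule $p_S$, since its applicability depends on a global automaton condition $\Rcal(\Gamma'[~]_i[~]_j, i, j) \cap L_b(S) \neq \emptyset$. Here I would argue that the rewriting of $\seq a \Delta$ into $\seq v \Delta$ replaces a single edge labelled $a$ (together with its dual $\bar a$) by a chain of fresh intermediate nodes connected by edges spelling out $v$ (and their duals). Let $T$ denote the tree of the premise and $T'$ that of the conclusion. I would exhibit an explicit simulation: any path in $\Rcal(T,i,j)$ that traverses the edge $a$ can be matched in $\Rcal(T',i,j)$ by the path that walks through the new chain, reading $v$ in place of $a$; and since $a \prods_S v$ means $v \in L_a(S)$, any word witnessing nonemptiness of the intersection for the premise can be rewritten (by substituting $v$ for the relevant occurrence of $a$, legitimate because $S$ is closed under the production and $L_b(S)$ is closed under $\prods_S$) into a witness for the conclusion. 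Hence the side condition is preserved, and the same $p_S$ instance applies to the rewritten sequent, yielding a derivation of equal height by the induction hypothesis applied to the premise of $p_S$. The delicate point to get right is the bookkeeping of node identifiers across the rewrite and the verification that the dual transitions are correctly induced, so that the propagation automaton of the conclusion genuinely contains (a homomorphic image of) that of the premise; this is exactly the place where the closure of $S$ and the symmetric definition of $\Rcal$ are both used.
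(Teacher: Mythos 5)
Your proposal is correct and follows essentially the same route as the paper's proof: induction on the height of the derivation of the premise, routine permutation for every rule except $p_S$, and for $p_S$ the substitution of the $v$-chain for each forward or dual traversal of the $a$-edge in the witnessing word, using closure of $S$ under converse for the dual transitions and the fact that $b \prods_S w \prods_S w'$ keeps the rewritten word in $L_b(S)$. One small caveat: your opening remark that the rewrite ``only adds edges'' and hence ``can only enlarge'' the accepted language is inaccurate --- the $a$-edge is replaced by the chain, not retained --- but your detailed simulation argument treats this correctly, so the proof stands.
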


\begin{theorem}
\label{thm:SKm-DKm-equiv}
For every context-free closed semi-Thue system $S$, 
the proof systems $\SKm(S)$ and $\DKm(S)$ are equivalent.
\end{theorem}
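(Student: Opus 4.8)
The plan is to prove equivalence of $\SKm(S)$ and $\DKm(S)$ by establishing inclusion in both directions, reusing the series of admissibility lemmas already assembled above. Recall that two systems are equivalent exactly when they prove the same formulae, and that a formula $A$ is provable in either system iff its one-sided nested sequent translation is derivable. So concretely I must show that a nested sequent is derivable in $\SKm(S)$ iff it is derivable in $\DKm(S)$.

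For the direction $\DKm(S) \Rightarrow \SKm(S)$, the strategy is to simulate each $\DKm(S)$ rule inside $\SKm(S)$. The deep logical rules ($\idd$, $\landd$, $\lord$, $\boxrd a$, $\diarup a$, $\diardn a$) each act at an arbitrary node of a nested sequent, but by the display property (Lemma~\ref{lm:display}) any subsequent sitting at node $i$ can be brought to the top level using only the residuation rule $r$, where the corresponding shallow rule of $\SKm$ applies, after which $r$ undisplays it back into place. The contraction and weakening packaged into the deep rules are recovered from the explicit $\mathit{ctr}$ and $\mathit{wk}$ rules of $\SKm$. The one genuinely new rule is the propagation rule $p_S$: here I would argue that whenever its side condition $\Rcal(\Gamma[~]_i[~]_j,i,j)\cap L_a(S)\neq\emptyset$ holds, the required movement of the formula $A$ from node $i$ to node $j$ is witnessed by a word $u=a_1\cdots a_n\in L_a(S)$ labelling a path in the propagation automaton, and this path can be traced by repeated display/undisplay steps using $r$ together with the $\odia a$ rule and the structural rules $\rho(S)$ (which realise exactly the rewrites generating $u$ from $a$). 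Thus $p_S$ is derivable in $\SKm(S)$.

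For the converse direction $\SKm(S) \Rightarrow \DKm(S)$, the work is essentially already done by the admissibility lemmas: the logical rules of $\SKm$ are special top-level instances of the corresponding deep rules of $\DKm$; weakening is admissible by Lemma~\ref{lm:weak}; the residuation rule $r$ is admissible by Lemma~\ref{lm:adm-r}; contraction is admissible by the combination of Lemmas on $\mathit{actr}$, $m$, and their admissibility (Lemmas~\ref{lm:adm-medial} and the intervening ones); and the grammar structural rules $\rho(S)$ are admissible by Lemma~\ref{lm:adm-rho-S}. Since every rule of $\SKm(S)$ is either an instance of a $\DKm(S)$ rule or admissible in $\DKm(S)$, a straightforward induction on the height of an $\SKm(S)$ derivation converts it into a $\DKm(S)$ derivation of the same end sequent.

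The main obstacle will be the careful treatment of the propagation rule $p_S$ in the first direction, namely showing that a nonempty intersection of the propagation automaton language with $L_a(S)$ really does correspond to a concrete sequence of $r$, $\odia a$, and $\rho(S)$ applications in the shallow calculus. The subtlety is bookkeeping: the propagation automaton encodes both forward edges $x\trans{a}y$ and their converses $y\trans{\bar a}x$, so a witnessing word may traverse edges in either direction, and one must check that each $\bar a$-step is correctly mirrored by a display move using the residuation rule, exploiting that $S$ is closed. Once this correspondence is set up, the remainder is routine. Since the overall argument parallels the tense-logic case of \cite{Gore11LMCS}, I would state the theorem and refer to that development for the detailed inductions, supplying only the propagation-specific reasoning in full.
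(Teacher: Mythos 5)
Your overall architecture matches the paper's: the direction from $\SKm(S)$ to $\DKm(S)$ via the admissibility lemmas (weakening, $r$, $\mathit{actr}$/$m$/contraction, and $\rho(S)$) is exactly the paper's argument, and your simulation of the core $\DKm$ rules by display/apply/undisplay plus explicit contraction and weakening is fine.

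The gap is in your simulation of $p_S$ in $\SKm(S)$. You propose to trace the witnessing path with $r$, the $\odia a$ rule, and ``the structural rules $\rho(S)$ (which realise exactly the rewrites generating $u$ from $a$)'', but $\rho(S)$ rewrites \emph{bare structural chains} $\seq{a}{\Delta} \rightsquigarrow \seq{u}{\Delta}$, whereas the path from $i$ to $j$ in $\Gamma$ runs through nodes that carry arbitrary formulae and side branches, all of which must survive into the conclusion $\Gamma[\odia a A]_i[\emptyset]_j$ of $p_S$. Applying $\rho(S)$ to that path either does not typecheck (the intermediate nodes are not empty) or, used in the proof-search direction, collapses the intermediate structure that the rule is required to preserve; and it cannot turn the \emph{formula} $\odia a A$ into $\odia u A$, which is what you actually need before the plain $\odia{a_k}$ rules can carry $A$ along the tree edges one letter at a time. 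The paper resolves this at the formula level and with \emph{cut}, which your toolkit omits: it first derives the implication $\odia u A \impl \odia a A$ as a theorem of $\SKm(S)$ (this is where $\rho(S)$ and $r$ are used, on a freshly built chain, exploiting closure of $S$), then uses the display property and the cut rule to obtain a derived rule $d$ that adjoins $\odia u A$ to $\odia a A$ at node $i$; it then weakens the premise derivation so that each intermediate node $s_k$ on the path carries the appropriate tail formula $\odia{a_{k+1}}\cdots\odia{a_n}A$, and chains ordinary diamond rules to absorb $A$ from $j$ back up to $\odia u A$ at $i$. This also forces the induction to be on the number of $p_S$ occurrences with a subinduction on height (the sub-derivation is weakened before the induction hypothesis is invoked), rather than a plain rule-by-rule replacement. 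Your worry about $\bar a$-edges being mirrored by residuation is real but minor; the missing ingredient is the cut against $\odia u A \impl \odia a A$ (or some substitute for it), without which the listed rules do not suffice to justify $p_S$.
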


\section{Regular grammar logics}

A context free semi-Thue system $S$ over $\Sigma$ 
is regular if for every $a \in \Sigma$, the language $L_a(S)$ is
a regular language.

In this section, we consider logics generated by regular closed semi-Thue
systems. We assume in this case that the union of the regular languages $\{L_a(S) \mid a \in \Sigma \}$
is represented explicitly as an FSA $\Acal$ with no silent transitions.
Thus
$
\Acal = (\Sigma, Q, I, F, \delta)
$
where $Q$ is a finite set of states, 
$I \subseteq Q$ is the set of initial states, $F \subseteq Q$ is the set
of final states, and $\delta$ is the transition relation. 
Given $\Acal$ as above, we write $s \trans{a}_\Acal t$ to mean $s \trans{a} t
\in \delta.$
We further assume that each $a \in \Sigma$ has a unique initial state $init_a \in I.$

We shall now define an alternative deep inference system given this
explicit representation of the grammar axioms as an FSA. 
Following similar tableaux systems in the literature that utilise such an
automaton representation~\cite{Horrocks06KR,Nguyen09CADE,Nguyen11StudiaLogica}, 
we use the states of the FSA to index formulae in
a nested sequent to record stages of a propagation. 
For this, we first introduce a form of labeled formula, written $s : A$,
where $s \in Q.$  The propagation rules corresponding to 
$\Acal$ are: 
$$
\infer[i]
{\Gamma[\odia a A]}
{
 \Gamma[\odia a A, init_a : A]
}
\qquad\qquad\qquad
\infer[\tup, \hbox{ if } s \trans{a}_\Acal s']
{\Gamma[s : A, \seq a {\Delta}]}
{
 \Gamma[s : A, \seq a {s' : A, \Delta}]
}
$$
$$
\infer[f, ~ \hbox{if } s \in F]
{\Gamma[s : A]}
{\Gamma[s : A, A]}
\qquad\qquad\qquad
\infer[\tdn, \hbox{ if } s \trans{\bar a}_\Acal s'. ]
{\Gamma[\seq a {s: A, \Delta}]}
{\Gamma[\seq a {s: A, \Delta}, s':A]}
$$

\begin{definition}
Let $S$ be a regular closed semi-Thue system over $\Sigma$ 
and let $\Acal$ 
be an FSA representing the regular language 
generated by $S$ and $\Sigma.$
$\DKm(\Acal)$ is the proof system $\DKm$ extended
with the rules $\{i, f, \tdn, \tup \}$ for $\Acal.$
\end{definition}

It is intuitively clear that $\DKm(\Acal)$ and $\DKm(S)$ are equivalent,
when $\Acal$ defines the same language as $L(S).$ Essentially, a propagation
rule in $\DKm(S)$ can be simulated by $\DKm(\Acal)$ using one or more
propagations of labeled formulae. The other direction follows from the fact
that when a diamond formula $\odia a A$ is propagated, via the use of labeled
formulae, to a labeled formula $s : A$ where $s$ is a final state, then there must
be a chain of transitions between labeled formulae for $A$ 
whose string forms an element of $\Acal$, hence also in $L_a(S).$ One can
then propagate directly $\odia a A$ in $\DKm(S).$
\begin{theorem}
\label{thm:DKm-A-eq-S}
Let $S$ be a regular closed semi-Thue system over $\Sigma$ and let $\Acal$ be a 
FSA representing the regular language generated by $S$ and $\Sigma.$
Then $\DKm(S)$ and $\DKm(\Acal)$ are equivalent.
\end{theorem}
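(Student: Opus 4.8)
The plan is to prove the two inclusions separately, in both cases treating the rules $i, \tup, \tdn, f$ as bookkeeping devices that record, by means of labelled formulae, partial runs of $\Acal$ along paths in a nested sequent. Throughout I would rely on the standing assumptions that $\Acal$ recognises $\bigcup_{a \in \Sigma} L_a(S)$ with $init_a$ the unique initial state associated with the start symbol $a$, and on the fact that the propagation automaton $\Rcal(\Gamma,i,j)$ depends only on the tree shape of $\Gamma$ and its edge labels, not on the formulae sitting at its nodes.

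For the direction from $\DKm(S)$ to $\DKm(\Acal)$, I would show that $p_S$ is admissible in $\DKm(\Acal)$; since every other rule of $\DKm(S)$ is already a rule of $\DKm(\Acal)$, this suffices. Suppose $p_S$ fires, so $\Rcal(\Gamma[~]_i[~]_j,i,j) \cap L_a(S) \neq \emptyset$, and pick a witness word $w$ in the intersection. Because $w$ is accepted by $\Rcal(\Gamma,i,j)$ it labels a concrete path $i = x_0, x_1, \ldots, x_n = j$ in the tree, each step either descending along an edge $x_{k-1} \next{b} x_k$ (contributing the letter $b$) or ascending from a child to its parent (contributing $\bar b$). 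Because $w \in L_a(S)$, there is an accepting run $init_a = s_0 \trans{b_1}_\Acal s_1 \cdots \trans{b_n}_\Acal s_n$ with $s_n \in F$. I would then simulate the single $p_S$ step by: one application of $i$ to plant $init_a : A$ at $x_0$; along the path, one application of $\tup$ at each descending step (justified by $s_{k-1} \trans{b}_\Acal s_k$) and one application of $\tdn$ at each ascending step (justified by $s_{k-1} \trans{\bar b}_\Acal s_k$), each moving the label from $s_{k-1} : A$ to $s_k : A$; and finally one application of $f$ (justified by $s_n \in F$) to produce the genuine $A$ at $j$. The topmost sequent of this stack is the $p_S$-premise $\Gamma[\odia a A]_i[A]_j$ augmented with the intermediate labelled formulae, and a proof of it is obtained from the given proof of $\Gamma[\odia a A]_i[A]_j$ by the weakening Lemma~\ref{lm:weak} (which holds for $\DKm(\Acal)$ by the same argument). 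Splicing yields the required derivation of the $p_S$-conclusion.

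For the converse, I would introduce the label-erasure $\Gamma^-$ of a sequent $\Gamma$, obtained by deleting every labelled formula, and prove by induction on $\DKm(\Acal)$-derivations that if $\Gamma$ is derivable and satisfies a trace invariant then $\Gamma^-$ is derivable in $\DKm(S)$. The trace invariant states that every labelled formula $s : A$ at a node $x$ is witnessed by a node $i$ carrying $\odia a A$ together with a word $w \in \Rcal(\Gamma,i,x)$ such that $init_a \trans{w}_\Acal s$; one checks that every rule of $\DKm(\Acal)$ propagates this invariant from conclusion to premises (rule $i$ seeds it with the empty word, $\tup$ and $\tdn$ extend $w$ by one forward resp.\ backward letter, and the remaining rules leave existing traces intact since they delete no nodes). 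The rules $\idd, \landd, \lord, \boxrd a, \diarup a, \diardn a$ commute with erasure, so the same rule rebuilds the conclusion from the erased premises; the rules $i, \tup, \tdn$ satisfy $(\text{premise})^- = (\text{conclusion})^-$ and are absorbed with no step at all. The only rule doing real work under erasure is $f$: its premise erases to $\Gamma^-[A]_j$ and its conclusion to $\Gamma^-[\emptyset]_j$, exactly the shape of a $p_S$ step. Here the trace invariant supplies, for the principal $s : A$ with $s \in F$, a node $i$ with $\odia a A$ and a word $w \in \Rcal(\Gamma,i,j) = \Rcal(\Gamma^-,i,j)$ with $init_a \trans{w}_\Acal s$; since $s \in F$ this $w$ lies in $L_a(S)$, so the side condition of $p_S$ is met and one $p_S$ application turns the inductively obtained proof of $\Gamma^-[A]_j$ into one of $\Gamma^-[\emptyset]_j$. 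Applied to a label-free endsequent $F$, for which the invariant holds vacuously, this yields the claim.

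The main obstacle I expect is this converse direction, and specifically getting the trace invariant exactly right so that the word accumulated along a propagation --- which may freely alternate between descending and ascending tree moves --- is simultaneously a path word of $\Rcal(\Gamma,i,j)$ and an element of $L_a(S)$ at the moment rule $f$ is applied. The delicate points to verify are that $\tup$ and $\tdn$ correspond precisely to the forward/backward transition pairs built into $\Rcal(\Gamma,i,j)$, and that erasure leaves the propagation automaton unchanged, so that the side condition transported from $\DKm(\Acal)$ is literally the side condition that $p_S$ requires.
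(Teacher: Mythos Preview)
Your proposal is correct and follows essentially the same strategy as the paper: simulate a single $p_S$ step by a chain $i;\tup/\tdn;\ldots;\tup/\tdn;f$ along the witnessing path (using weakening to reattach the original premise), and in the other direction erase labels, collapse $i,\tup,\tdn$ to identities, and replace each $f$ by one $p_S$, justified by an invariant tracking the origin of every labelled formula. The only cosmetic difference is that the paper phrases the invariant (``$\Acal$-connectedness'') as the explicit presence of all intermediate labelled formulae $s_k{:}A$ along the chain, whereas you record just the existence of a path word $w \in \Rcal(\Gamma,i,x)$ with $init_a \trans{w}_\Acal s$; both formulations are preserved bottom-up by the rules and deliver exactly the $p_S$ side condition when $f$ fires.
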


\section{Decision procedures}

We now show how the proof systems $\DKm(\Acal)$ and $\DKm(S)$ can
be turned into decision procedures for regular grammar logics. Our aim is
to derive the decision procedure for $\DKm(S)$ directly without the need
to convert $S$ explicitly to an automaton; the decision procedure $\DKm(\Acal)$
will serve as a stepping stone towards this aim. 
The decision procedure for $\DKm(S)$ is a departure from all existing
decision procedures for regular grammar 
logics (with or without converse)~\cite{Horrocks06KR,Demri05,GoreN05,Nguyen09CADE,Nguyen11StudiaLogica}
that assume that an FSA representing $S$ is given.

\subsection{An automata-based procedure}
\label{sec:auto-proc}

The decision procedure for $\DKm(\Acal)$ is basically just backward proof search,
where one tries to saturate each sequent in the tree of sequents until 
either the $\idd$ rule is applicable, or a certain stable state is reached. 
When the latter is reached, we show that a counter model to the
original nested sequent can be constructed. Although we obtain this procedure via
a different route, the end result is very similar to the tableaux-based decision procedure 
in \cite{Horrocks06KR}. 
In particular, our notion of a stable state (see the definition of $\Acal$-stability below) used 
to block proof search is the same as the blocking condition in 
tableaux systems~\cite{Horrocks06KR,Demri05,GoreN05,Nguyen11StudiaLogica,Nguyen09CADE}, which
takes advantange of the labeling of formulae with the states of the automaton. 

\begin{figure}[t]
$Prove_1(\Acal, \Gamma)$
\begin{enumerate}
\item If $\Gamma = \Gamma'[p,\neg p]$, return $\top$.
\item If $\Gamma$ is $\Acal$-stable, return $\bot.$
\item If $\Gamma$ is not saturated: 
  \begin{enumerate}
\item If $A \lor B \in \Gamma|i$ but $A \notin \Gamma|i$
or $B \notin \Gamma|i$, then let $\Gamma' \assign \Gamma(i \addtree {\{A,B\}})$
and return $Prove_1(\Acal,\Gamma').$

\item Suppose $A_1 \land A_2 \in \Gamma|i$ but neither $A_1 \in \Gamma|i$
nor $A_2 \in \Gamma|i$. Let $\Gamma_1 = \Gamma(i \addtree \{A_1\})$
and $\Gamma_2 = \Gamma(i \addtree \{A_2\}).$
Then return $\bot$ if $Prove_1(\Acal,\Gamma_j) = \bot$ for some $j \in \{1,2\}.$
Otherwise return $\top.$
  \end{enumerate}

\item If $\Gamma$ is not $\Acal$-propagated: 
then there is a node $i$ s.t. one of the following applies:
\begin{enumerate}
\item $\odia a A \in \Gamma|i$ but $init_a : A \not \in \Gamma|i$. 
Then let $\Gamma' := \Gamma(i \addtree \{init_a : A\}).$
\item $s : A \in \Gamma|i$ and $s \in F$, but $A \not \in \Gamma|i.$
Then let $\Gamma' := \Gamma(i \addtree \{A\}).$

\item $s : A \in \Gamma|i$, there is $j$ s.t.
$i \next{a} j$ and $s \trans{a}_\Acal t$, but $t : A \not \in \Gamma|j.$
Then let $\Gamma' := \Gamma(j \addtree \{t:A\}).$

\item $s : A \in \Gamma|i$, there is $j$ s.t.
$j \next{a} i$ and $s \trans{\bar a}_\Acal t$, but $t : A \not \in \Gamma|j.$
Then let $\Gamma' := \Gamma(j \addtree \{t:A\}).$
\end{enumerate}
Return $Prove_1(\Acal,\Gamma').$

\item If there is an internal node $i$ in $\Gamma$ that is not realised:
Then there is $\obox a A \in \Gamma|i$ such that $A \not \in \Gamma|j$
for every $j$ s.t. $i \next{a} j.$ 
Let $\Gamma' := \Gamma(i \addtree \seq a {A}).$
Return $Prove_1(\Acal,\Gamma').$

\item If there is a leaf node $i$ that is not realised
and is not a loop node: 
Then there is $\obox a A \in \Gamma|i$.
Let $\Gamma' := \Gamma(i \addtree \seq a {A}).$
Return $Prove_1(\Acal,\Gamma').$
\end{enumerate}
\caption{An automata-based prove procedure.}
\label{fig:prove1}
\end{figure}

\begin{definition}[Saturation and realisation]
A node $i$ in $\Gamma$ is {\em saturated} if the following hold:
\begin{enumerate}
\item If $A \in \Gamma|i$ then $\lneg A \not \in \Gamma|i.$ 
\item If $A \lor B \in \Gamma|i$ then $A \in \Gamma|i$ and $B \in \Gamma|i$.
\item If $A \land B \in \Gamma|i$ then $A \in \Gamma|i$ or $B \in \Gamma|i$.
\end{enumerate}
$\Gamma|i$ is {\em realised} if 
$\obox a A \in \Gamma|i$ implies that there exists $j$ such that $i \next{a} j$
and $A \in \Gamma|j.$
\end{definition}

\begin{definition}[$\Acal$-propagation]
Let $\Acal = (\Sigma, Q, I, F, \delta)$. A nested sequent
$\Gamma$ is said to be {$\Acal$-propagated} if for every 
node $i$ in $\Gamma$, the following hold: 
\begin{enumerate}
\item If $\odia a A \in \Gamma|i$ then $init_a : A \in \Gamma|i$ for any $a \in
\Sigma.$
\item If $s : A \in \Gamma|i$ and $s \in F$, then $A \in \Gamma|i.$
\item For all $j$, $a$, $s$ and $t$, such that $i \next{a} j$ and $s
\trans{a}_\Acal t$, 
if $s : A \in \Gamma|i$ then $t : A \in \Gamma|j.$

\item For all $j$, $a$, $s$ and $t$, such that $j \next{a} i$ and $s
\trans{\bar a}_\Acal t$, 
if $s : A \in \Gamma|i$ then $t : A \in \Gamma|j.$
\end{enumerate}
\end{definition}

\begin{definition}[$\Acal$-stability]
A nested sequent $\Gamma$ is {\em $\Acal$-stable} if 
\begin{enumerate}
\item Every node is saturated.
\item $\Gamma$ is $\Acal$-propagated.
\item Every internal node is realised. 
\item For every leaf node $i$, one of the following holds:
  \begin{enumerate}
\item There is an ancestor node $j$ of $i$ such that $\Gamma|i = \Gamma|j.$
We call the node $i$ a {\em loop node}. 
\item $\Gamma|i$ is realised (i.e., it cannot have a member
of the form $\obox a A$).
  \end{enumerate}
\end{enumerate}
\end{definition}

The prove procedure for $\DKm(\Acal)$ is given in Figure~\ref{fig:prove1}. 
We show that the procedure is sound and complete with respect to $\DKm(\Acal)$.
The proofs of the following theorems can be found in the appendix.
\begin{theorem}
\label{thm:auto correctness}
If $Prove_1(\Acal,\{F\})$ returns $\top$ then $F$ is provable in $\DKm(\Acal).$
If $Prove_1(\Acal,\{F\})$ returns $\bot$ then $F$ is not provable in
$\DKm(\Acal).$
\end{theorem}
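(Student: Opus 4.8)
The plan is to treat the two implications separately: the first (the $\top$ case) is a direct soundness argument that matches each step of $Prove_1$ to a backward rule application of $\DKm(\Acal)$, while the second (the $\bot$ case) is a counter-model construction, which is where the real work lies.

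For the $\top$ case, I would argue by induction on the recursion tree of $Prove_1(\Acal,\Gamma)$. The base case is step~1, where $\Gamma = \Gamma'[p,\neg p]$ is exactly the conclusion of $\idd$. Each remaining step that eventually returns $\top$ transforms $\Gamma$ into the one or two arguments of its recursive calls, and each such transformation is precisely the upward reading of a $\DKm(\Acal)$ rule: steps 3(a), 4(a)--(d), and 5--6 correspond to $\lord$, to $\{i,f,\tup,\tdn\}$, and to $\boxrd a$ respectively, all unary, while step 3(b) is the binary rule $\landd$ and returns $\top$ only when both recursive calls do. Since the top-level call returns $\top$, the recursion never passes through step~2 (which returns $\bot$), so gluing these rule instances along the recursion tree yields a $\DKm(\Acal)$ derivation of $\Gamma$, hence of $\{F\}$.

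For the $\bot$ case, $Prove_1$ returns $\bot$ only at step~2, having reached an $\Acal$-stable sequent $\Gamma$ that still contains $F$ at its root (the procedure only ever adds formulae). From $\Gamma$ I would build a finite $\Sigma$-model $\Mf$ whose worlds are the nodes of $\Gamma$ with every loop node identified with the witnessing ancestor, whose base labelled edges are the tree edges $i \next{a} j$ of $\Gamma$ together with their reverses $j \next{\bar a} i$, and whose valuation makes exactly the atoms occurring at a node false there (saturation condition~1 and the non-applicability of $\idd$ guarantee consistency). The relation $R_a$ is taken to be the least relation making the frame satisfy $S$; the key structural fact I would establish first is that, because $S$ is closed and context-free, $(x_i,y)\in R_a$ holds iff there is a walk in this labelled graph from $i$ to the node of $y$ whose label word lies in $L_a(S)$. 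This simultaneously yields that $\Mf$ satisfies $S$ and that $R_{\bar a}$ is the converse of $R_a$.

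The heart of the argument is the \emph{truth lemma}: for every node $i$ and every $A\in\Gamma|i$, $\Mf,x_i\not\models A$, proved by induction on $A$. The atomic, $\lor$ and $\land$ cases use the valuation and saturation conditions~2 and~3; the $\obox a$ case uses realisation, noting that a loop node inherits its witnesses from the realised ancestor it is folded onto. The crucial case is $A=\odia a B$, where I must show $B$ false at every $R_a$-successor of $x_i$: by the structural fact such a successor is reached from $i$ by a walk labelled by some $w\in L_a(S)$, and since $\Acal$ accepts $L_a(S)$ from $init_a$ with dual transition pairs, the $\Acal$-propagation conditions (started by clause~1 at $i$, driven along edges by clauses~3 and~4, discharged by clause~2 at the final state) guarantee $B$ was deposited at the endpoint node, so $B$ is false there by the induction hypothesis. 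The main obstacle is making exactly this case watertight: I expect the delicate point to be that the invariant ``every node reachable by a word in $L_a(S)$ carries $B$'' must survive the loop-folding, which works because a loop node and its ancestor carry identical labelled formula sets, so a walk in the folded graph is faithfully tracked by the propagation on $\Gamma$. Finally, the truth lemma at the root gives $\Mf,x_{\mathrm{root}}\not\models F$ with $\Mf\models S$, so $F$ is not $S$-valid; by Theorem~\ref{thm:Km-S} and the equivalences $\DKm(\Acal)\equiv\DKm(S)\equiv\SKm(S)\equiv\Km(S)$ (Theorems~\ref{thm:DKm-A-eq-S}, \ref{thm:SKm-DKm-equiv}, \ref{thm:SKm-Km-equiv}), $F$ is not provable in $\DKm(\Acal)$.
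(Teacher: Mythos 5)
Your proposal is correct and follows essentially the same route as the paper's proof: a direct reading of the $\top$ case as backward rule applications, and for the $\bot$ case a countermodel built from the $\Acal$-stable sequent by folding unrealised loop leaves onto their witnessing ancestors, defining $R_a$ via walks whose labels lie in $L_a(S)$ (your least-fixed-point definition coincides with the paper's automaton-intersection definition), and a truth lemma whose diamond case tracks an accepting run of $\Acal$ along the walk using the $\Acal$-propagation conditions, including the same case split for transitions created by loop-folding. The only cosmetic difference is that you derive the walk characterisation of $R_a$ as a lemma rather than taking it as the definition.
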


\begin{theorem}
\label{thm:auto-proc-terminates}
For every nested formula  $A$, $Prove_1(\Acal, \{A\})$ terminates. 
\end{theorem}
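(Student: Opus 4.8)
The plan is to show that the tree of nested sequents built by $Prove_1$ is finite and that each of its nodes is modified only finitely often; termination then follows, since the whole computation performs only finitely many recursive calls. Two facts will be used throughout. First, every formula that can ever occur during the run on $\{A\}$ belongs to a fixed finite set $\mathit{cl}(A)$, consisting of the subformulas of $A$ together with their labelled variants $s:B$ for $s\in Q$: every clause only introduces subformulas of formulas already present, or labels them with states of $\Acal$, so $|\mathit{cl}(A)|\le(|Q|+1)\cdot|\mathsf{sub}(A)|$. Second, inspection of the clauses of $Prove_1$ shows that they only ever \emph{add} formulas to nodes or \emph{add} new leaf nodes; hence the content $\Gamma|i$ of each node grows monotonically and is at all times a subset of $\mathit{cl}(A)$.

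To apply K\"onig's lemma I would first bound the branching. A new child is created only by clauses~5 and~6, to realise some box formula $\obox a B$ occurring at a node $i$; the created child $\seq a B$ contains $B$, so $\obox a B$ becomes realised at $i$ and, by monotonicity, stays realised. Thus each occurrence of a box formula at $i$ spawns at most one child, and the number of children of $i$ is bounded by $|\mathit{cl}(A)|$. So the sequent tree is finitely branching.

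The crux is to show that no branch is infinite, and here the loop-node blocking is essential: labelled diamond formulas are propagated downward \emph{unchanged} by the $\tup$ rule, so modal depth does not decrease along a branch and only blocking can cut it off. By the ordering of clauses, whenever a leaf $i$ is expanded (clause~6) the current sequent is already saturated and $\Acal$-propagated, and $i$ is not a loop node, i.e.\ no ancestor $j$ currently satisfies $\Gamma|i=\Gamma|j$. Since there are at most $2^{|\mathit{cl}(A)|}$ possible node contents and contents are monotone, each content can change only finitely often and eventually stabilises; I would argue that on any sufficiently long branch two distinct nodes must share a common stable content, after which the descendant satisfies the loop condition and is never expanded again, bounding the length of every branch by a function of $|\mathit{cl}(A)|$. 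The hard part lies exactly here: the loop test compares the \emph{current} contents of $i$ and of its ancestors, and these contents keep growing during the search — in particular clause~4(d) propagates labelled formulas \emph{upward} from a child to its parent — so equality of contents is time dependent and one cannot simply read off a repetition from the finite codomain. I would handle this by combining monotonicity with the finiteness of $\mathit{cl}(A)$: an ancestor can strictly outgrow a matching descendant only finitely often, so after boundedly many levels a stable duplicate is forced and blocking triggers.

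Granting the branch bound, K\"onig's lemma yields that the sequent tree is finite, and it then remains to bound the work done on it. Each node receives only finitely many formulas, since its content is monotone and capped by $|\mathit{cl}(A)|$, so clauses~3(a),~4, and~5--6 fire only finitely often in total. The one remaining source of branching in the recursion is clause~3(b) on a conjunction, which splits into two recursive calls; each split adds a conjunct and so discharges one conjunction occurrence, of which there are finitely many over the finite tree. Hence the recursion tree of $Prove_1$ is itself finitely branching with finite branches, and a final appeal to K\"onig's lemma shows it is finite. Therefore $Prove_1(\Acal,\{A\})$ makes only finitely many recursive calls and terminates.
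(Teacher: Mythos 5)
Your argument follows essentially the same route as the paper's own (outline) proof: bound every formula that can appear by a finite closure set consisting of subformulae of $A$ and their $Q$-labelled variants, note that node contents grow monotonically within this set, bound the branching of the sequent tree by the number of boxed subformulae, bound the height of any branch via the loop check together with a pigeonhole on the at most $2^{(|Q|+1)n}$ possible node contents, and conclude that the search makes only finitely many recursive calls. The one place you depart from the paper is that you explicitly flag the delicate point: the loop test compares \emph{current}, still-growing contents (clause 4(d) does push labelled formulae upward into ancestors), so a repetition cannot simply be read off the finite codomain; the paper's proof asserts the height bound directly from the pigeonhole without acknowledging this. Your proposed repair (``an ancestor can strictly outgrow a matching descendant only finitely often, so a stable duplicate is eventually forced'') is only gestured at and, as a purely combinatorial claim about monotone subsets of a finite set evaluated at expansion times, does not by itself close the gap --- one can write down abstract monotone sequences in which every newly expanded leaf differs from all of its already-grown ancestors indefinitely, so a complete argument must use more specific properties of how the propagation and realisation steps populate new children. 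That said, the paper's own proof is no more detailed on exactly this point, so your proposal matches the published argument in both structure and level of rigour, and is if anything more candid about where the real work lies.
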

\begin{corollary}
The proof system $\DKm(\Acal)$ is decidable.
\end{corollary}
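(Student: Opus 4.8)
The plan is to bound, once and for all, the size of the tree of nested sequents that $Prove_1$ can build from the input $\{A\}$, and then read off termination from a well-founded measure on configurations. First I would establish a \emph{subformula property}: every formula that $Prove_1$ ever inserts into a node is either a subformula of $A$ or a labeled formula $s : B$ with $B$ a subformula of $A$ and $s \in Q$. This is a routine induction over the steps of the procedure, since the saturation cases (3a, 3b), the propagation cases (4a--4d, introducing $init_a : B$ and the targets of $\tup$, $\tdn$ and the final-state rule), and the realisation cases (5, 6) all introduce only such formulae. Hence the stock $\mathcal{F}_A$ of formulae available at any node is finite, say $|\mathcal{F}_A| \le K$, and every node content $\Gamma|i$ is a subset of $\mathcal{F}_A$. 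Alongside this I would record the \emph{monotonicity} invariant: each recursive call either adds at least one formula to an existing node (steps 3, 4) or adds exactly one new child carrying $\seq a B$ (steps 5, 6), and never deletes anything.

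Next I would bound the shape of the tree. The branching is easy: steps 5 and 6 create a fresh $a$-child only to realise some $\obox a B \in \Gamma|i$, and once such a child exists the triggering condition for that box-formula can never fire again, by monotonicity; since a node holds at most $K$ box-formulae, it acquires at most $K$ children. The depth is the genuine crux, and here I would use loop-checking: a branch can be extended below a leaf $i$ only by step 6, and only when $i$ is \emph{not} a loop node, i.e.\ $\Gamma|i \ne \Gamma|j$ for every ancestor $j$. Since node contents range over the finite collection of subsets of $\mathcal{F}_A$, the intended conclusion is that no branch can outrun this finite supply of contents, bounding the depth, and hence the total number of nodes, by some $N$ depending only on $A$ and $\Acal$.

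The step I expect to be the main obstacle is precisely this depth bound, because loop-checking compares the \emph{current} contents while monotonicity lets those contents keep growing after a node has been visited. In particular the upward rule $\tdn$ (case 4d) can enlarge an ancestor's content long after the ancestor was created, so a node that was once a loop node may subsequently cease to be one, and a naive pigeonhole on distinct contents along a branch does not close. To make the argument airtight I would not count distinct contents directly but instead track the growth carefully: I would argue that because contents are monotone and confined to the finite lattice of subsets of $\mathcal{F}_A$, each node's content stabilises after finitely many insertions, and that the loop-blocking of step 6, applied only after the whole sequent is saturated and $\Acal$-propagated (so steps 3 and 4 fail), still forces any sufficiently long branch to present a leaf whose content coincides with that of an ancestor, permanently blocking further extension along that branch. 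This is where the interaction between propagation and blocking must be handled with care.

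Once the node bound $N$ is secured, the wrap-up is clean. I would exhibit the lexicographic measure $(N - n(\Gamma),\ NK - f(\Gamma))$, where $n(\Gamma)$ is the number of nodes and $f(\Gamma)$ the total number of formula occurrences. Steps 5 and 6 strictly decrease the first component; steps 3 and 4 leave it fixed and strictly decrease the second; and the two recursive calls generated by the $\land$ case (3b) both act on configurations of strictly smaller measure. Since steps 1 and 2 return immediately, well-founded induction on this measure (equivalently, finite branching together with finitely long branches and K\"onig's lemma) shows that the call tree of $Prove_1(\Acal, \{A\})$ is finite, which is exactly the desired termination.
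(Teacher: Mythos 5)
Your route is essentially the paper's: the corollary rests on Theorem~\ref{thm:auto-proc-terminates}, whose appendix proof uses exactly your ingredients --- set-based sequents over the finite stock of (labelled) subformulae, monotone node contents, branching bounded by the number of box-subformulae per node, depth bounded by the loop check, and then finiteness of the search because every recursive call strictly enlarges the sequent. One preliminary remark: decidability of $\DKm(\Acal)$ needs more than termination. You also need Theorem~\ref{thm:auto correctness}, i.e.\ that $Prove_1$ answers $\top$ exactly on the $\DKm(\Acal)$-provable formulae (the $\bot$ case going through countermodel extraction from an $\Acal$-stable sequent). Your proposal never mentions this half; presumably you intend to import it, but as written you only show that the procedure halts, not that it decides provability.

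More substantively, the depth bound --- which you yourself flag as the crux --- is not actually proved. You correctly observe that node contents keep growing after a leaf has been extended (e.g.\ $\tdn$ pushes labelled formulae up to ancestors, and rule $f$ plus saturation can then inject fresh unlabelled material there), so the loop check at extension time compares snapshots that are later invalidated, and a plain pigeonhole on the $2^{(m+1)n}$ possible contents along a branch does not close: the pigeonhole pair need not involve the current leaf, and contents that were distinct when a leaf was extended can later coincide (or vice versa). Having diagnosed this, your resolution is only the assertion that blocking ``still forces any sufficiently long branch to present a leaf whose content coincides with that of an ancestor, permanently blocking further extension'' --- but that is precisely the statement to be proved, not an argument for it. In fairness, the paper's own outline disposes of this point in one sentence with the same pigeonhole count, so you are no worse off than the published text; but a complete proof needs an additional invariant, for instance one guaranteeing that at the moment a leaf is extended the contents along its branch are pairwise distinct, or a counting argument bounding how often later propagation can re-enable step~(vi) at a given leaf. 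As it stands, the depth bound --- and hence the node bound $N$ that feeds your terminating lexicographic measure --- is a genuine gap.
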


\subsection{A grammar-based procedure}
\label{sec:grammar-proc}

The grammar-based procedure differs from the automaton-based procedure in
the notion of propagation and that of a stable nested sequent. 
In the following, given a function $\theta$ from labels to labels, 
and a list $\vec i = i_1,\dots,i_n$ of labels, we write
$\theta(\vec i)$ to denote the list $\theta(i_1),\dots,\theta(i_n).$
We write $[\vec i \assign \theta(\vec i)]$ to mean the mapping 
$[i_1 \assign \theta(i_1), \ldots, i_n \assign \theta(i_n)].$

In the following definitions, $S$ is assumed to be a context-free
semi-Thue system over some alphabet $\Sigma.$
\begin{definition}[$S$-propagation]
Let $\Gamma$ be a nested sequent. 
Let $\Pcal = (\Sigma,Q,\{i\},\{j\},\delta)$ be a propagation automata,
where $Q$ is a subset of the nodes in $\Gamma.$
We say that $\Gamma$ is {\em $(S,\Pcal)$-propagated}
if the following holds:
$\odia a A \in \Gamma|i$ and $\Pcal \cap L_a(S) \not = \emptyset$
imply $A \in \Gamma|j.$
$\Gamma$ is {\em $S$-propagated} if 
it is $(S, \Rcal(\Gamma,i,j))$-propagated 
for every node $i$ and $j$ in $\Gamma$.
\end{definition}

\begin{definition}[$S$-stability]\label{def:S-stable}
A nested sequent 
$\Gamma$ is {\em $S$-stable} if 
\begin{enumerate}
\item Every node is saturated.
\item $\Gamma$ is $S$-propagated.
\item Every internal node is realised. 
\item 
Let $\vec x = x_1,\dots,x_n$ be the list
of all unrealised leaf nodes.
There is a function $\lambda$ assigning 
each unrealised leaf node $x_m$ to an ancestor $\lambda(x_m)$ of $x_m$
such that $\Gamma|x_m = \Gamma|\lambda(x_m)$ and for every node $y$ and $z$,
$\Gamma$ is $(S,\Pcal)$-propagated, where 
$
\Pcal = \Rcal(\Gamma,y,z)[\vec x \assign \lambda(\vec x)].
$
\end{enumerate}
\end{definition}

\begin{figure}[t]
$Prove_2(S, \Gamma,k)$
\begin{enumerate}
\item If $\Gamma = \Gamma'[p,\neg p]$, return $\top$.
\item If $\Gamma$ is $S$-stable, return $\bot.$
\item If $\Gamma$ is not saturated: 
\begin{itemize}
\item If $A \lor B \in \Gamma|i$ but $A \notin \Gamma|i$
or $B \notin \Gamma|i$, then let $\Gamma' := \Gamma(i \addtree \{A,B\})$
and return $Prove_2(S,\Gamma',k).$

\item Suppose $A_1 \land A_2 \in \Gamma|i$ but neither $A_1 \in \Gamma|i$
nor $A_2 \in \Gamma|i$. Let $\Gamma_1 = \Gamma(i \addtree \{A_1\})$
and $\Gamma_2 = \Gamma(i \addtree \{A_2\}).$
Then return $\bot$ if $Prove_2(S,\Gamma_j,k) = \bot$ for some $j \in \{1,2\}.$
Otherwise return $\star$ if $Prove_2(S,\Gamma_j,k) = \star$ for some $j \in \{1,2\}.$
Otherwise return $\top.$
\end{itemize}

\item If $\Gamma$ is not $S$-propagated: 
then there must be nodes $i$ and $j$ such that
$\odia a A \in \Gamma|i$ and 
$\Rcal(\Gamma,i,j) \cap L_a(S) \not = \emptyset$,
but $A \not \in \Gamma|j.$
Let $\Gamma' := \Gamma(j \addtree \{A\}).$
Return $Prove_2(S, \Gamma',k).$

\item If there is an internal node $i$ in $\Gamma$ that is not realised:
Then there is $\obox a A \in \Gamma|i$ such that $A \not \in \Gamma|j$
for every $j$ s.t. $i \next{a} j.$ 
Let $\Gamma' := \Gamma(i \addtree \seq a {A}).$
Return $Prove_2(S,\Gamma',k).$

\item Non-deterministically choose a leaf node $i$ that is not realised and is
at height equal to or lower than $k$ in $\Gamma$: 
Then there is $\obox a A \in \Gamma|i$.
Let $\Gamma' := \Gamma(i \addtree \seq a {A}).$
Return $Prove_2(S,\Gamma',k).$

\item Return $\star$.
\end{enumerate}
\caption{A grammar-based prove procedure.}
\label{fig:prove2}
\end{figure}

Now we define a non-deterministic prove procedure $Prove_2(S,\Gamma,k)$ as in
Figure~\ref{fig:prove2}, where $k$
is an integer and $S$
is a context-free closed semi-Thue system. Given a nested sequent $\Gamma$, and
a node $i$ in $\Gamma$, the {\em height} of $i$ in $\Gamma$ is the length of the
branch from the root of $\Gamma$ to node $i.$ 
The procedure $Prove_2(S,\Gamma,k)$ tries to construct a derivation of $\Gamma$,
but is limited to exploring only those nested sequents derived from $\Gamma$ that
has height at most $k.$  
The procedure $Prove$ given below is essentially an iterative deepening procedure
that calls $Prove_2$ repeatedly with increasing values of $k$. If an input sequent
is not valid, the procedure will try to guess the smallest $S$-stable sequent
that refutes the input sequent, i.e., it essentially tries to construct a
finite countermodel. 
\begin{enumerate}
\item[] \noindent $Prove(S,\Gamma)$
\item $k:=0$.
\item If $Prove_2(S,\Gamma,k)=\top$ or $Prove_2(S,\Gamma,k)=\bot$, return
  $\top$ or $\bot$ respectively.
\item $k:=k+1$. Go to step (ii).
\end{enumerate}

The procedure $Prove$ gives a semi-decision procedure for context-free 
grammar logics. This uses the following lemma about $S$-stable sequents, which
shows how to extract a countermodel from an $S$-stable sequent. 

\begin{lemma}
\label{lm:S-stable}
Let $S$ be a context-free closed semi-Thue system.  
If $\Gamma$ is an $S$-stable nested sequent, then there exists
a model $\mathfrak{M}$ such that for every node $x$ in $\Gamma$
and for every $A \in \Gamma|x$, there exists a world $w$ in $\mathfrak{M}$
such that $\mathfrak{M}, w \not \models A.$
\end{lemma}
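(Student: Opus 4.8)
The plan is to construct the countermodel $\Mf = \langle W, R, V\rangle$ directly from the tree of $\Gamma$, quotienting leaf nodes back to their looping ancestors so that the model is finite. First I would let $W$ be the set of nodes of $\Gamma$, but with each unrealised leaf node $x_m$ identified with the ancestor $\lambda(x_m)$ guaranteed by clause (4) of $S$-stability; write $[i]$ for the equivalence class (world) of node $i$. For the valuation, set $[i] \in V(p)$ iff $p \notin \Gamma|i$ — the negation is deliberate, since we are building a model that \emph{refutes} every formula in $\Gamma|i$, so an atom $p$ should be made false (hence $\neg p$ true) exactly where $p$ sits in the sequent; saturation clause (1), which forbids $p$ and $\neg p$ from co-occurring in a node, is what makes this valuation well-defined on each class. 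For the accessibility relations, I would put $R_a([i],[j])$ whenever there is an edge $i \next{a} j$ in $\Gamma$ (after the quotient), and then close the family $\{R_a\}$ under the converse condition $R_a = \{(x,y) \mid R_{\bar a}(y,x)\}$ to ensure $\langle W,R\rangle$ is a genuine $\Sigma$-frame.

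The heart of the argument is the main claim, proved by induction on the structure of $A$: for every node $i$ and every $A \in \Gamma|i$, we have $\Mf, [i] \not\models A$. The base case $A = p$ is immediate from the valuation, and $A = \neg p$ follows from saturation clause (1) (if $\neg p \in \Gamma|i$ then $p \notin \Gamma|i$, so $[i] \in V(p)$, so $[i] \not\models \neg p$). The Boolean cases use saturation: if $A \lor B \in \Gamma|i$ then clause (2) gives $A, B \in \Gamma|i$, so the induction hypothesis refutes both disjuncts; if $A \land B \in \Gamma|i$ then clause (3) supplies at least one conjunct in $\Gamma|i$, and the induction hypothesis refutes it. For $A = \obox a B$, I use realisation: since every internal node is realised and every unrealised leaf has been folded onto a realised ancestor, the witness $j$ with $i \next{a} j$ and $B \in \Gamma|j$ exists, giving an $R_a$-successor $[j]$ with $[j] \not\models B$ by the induction hypothesis, hence $[i] \not\models \obox a B$.

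The genuinely delicate case is the diamond $A = \odia a B$, and I expect this to be the main obstacle. Here I must show $[i] \not\models \odia a B$, i.e.\ that \emph{every} $R_a$-successor $[j]$ of $[i]$ refutes $B$. The relation $R_a$ is, after closing under converse and composing along paths, exactly the reachability encoded by the propagation automaton $\Rcal(\Gamma,i,j)$; so an $R_a$-edge from $[i]$ to $[j]$ corresponds to a word in $\Rcal(\Gamma,i,j)$, and the fact that $R_v \subseteq R_u$ must hold in any $S$-model means the relevant propagation is witnessed precisely when $\Rcal(\Gamma,i,j) \cap L_a(S) \neq \emptyset$. The $S$-propagation condition (clause (2) of stability) then forces $B \in \Gamma|j$, whence the induction hypothesis yields $[j] \not\models B$. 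The subtlety is that the quotient by $\lambda$ changes the automaton: after identifying $x_m$ with $\lambda(x_m)$, the reachability relation in the \emph{quotient} frame is described by $\Rcal(\Gamma,y,z)[\vec x \assign \lambda(\vec x)]$, and I must verify that an $R_a$-successor in $\Mf$ always arises from such a substituted automaton. This is exactly why clause (4) of $S$-stability demands $(S,\Pcal)$-propagation for $\Pcal = \Rcal(\Gamma,y,z)[\vec x \assign \lambda(\vec x)]$ — it guarantees that propagation survives the folding, so no spurious unrefuted successor is created. I would therefore first establish a bridging lemma stating that $R_a([y],[z])$ in $\Mf$ holds iff some word of $L_a(S)$ is accepted by the substituted automaton $\Rcal(\Gamma,y,z)[\vec x \assign \lambda(\vec x)]$, and then the diamond case follows from clause (4) together with the induction hypothesis.
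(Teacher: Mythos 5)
Your overall architecture matches the paper's: identify each unrealised leaf $x_m$ with the ancestor $\lambda(x_m)$ supplied by clause (4) of $S$-stability, read the valuation off the atoms so that saturation disposes of the propositional cases, use realisation of internal nodes for $\obox a$, and use $(S,\Pcal)$-propagation with respect to the substituted automata $\Rcal(\Gamma,y,z)[\vec x \assign \lambda(\vec x)]$ for $\odia a$. (The paper removes the unrealised leaves from $W$ and redirects them through $\lambda$ when choosing the refuting world rather than literally quotienting, and it puts $x\in V(p)$ iff $\neg p\in\Gamma|x$ rather than iff $p\notin\Gamma|x$; both differences are harmless.)

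The gap is in your definition of the accessibility relations. You set $R_a([i],[j])$ exactly when there is a tree edge $i\next{a}j$ (after the quotient) and close only under converse. With that relation the ``bridging lemma'' you invoke in the diamond case --- $R_a([y],[z])$ iff $\Rcal(\Gamma,y,z)[\vec x\assign\lambda(\vec x)]\cap L_a(S)\neq\emptyset$ --- is false in the right-to-left direction: a word $bc\in L_a(S)$ accepted by the automaton yields a two-edge path but no direct $a$-edge, so the edge-generated $R_a$ is too small. For the same reason your frame need not satisfy the production rules of $S$ (e.g.\ $a\prod bc$ demands $R_b\circ R_c\subseteq R_a$), and the lemma only does its job in Theorem~\ref{thm:Prove-sound-complete} if the countermodel is an $S$-model. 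Your phrase ``after closing under converse and composing along paths'' gestures at the fix but is not part of the definition you actually gave. The repair is to take your bridging characterisation as the \emph{definition}, as the paper does: $R_a(x,y)$ iff $\Pcal(x,y)\cap L_a(S)\neq\emptyset$ where $\Pcal(x,y)=\Rcal(\Gamma,x,y)[\vec x\assign\lambda(\vec x)]$ (equivalently, the least family of relations containing the tree edges that is closed under converse and under $R_v\subseteq R_u$ for each $u\prod v\in S$). Then the diamond case is immediate from clause (4) of $S$-stability, the box case still goes through because the single letter $a$ lies in $L_a(S)$ and labels the edge to the realising child (or to its $\lambda$-image), and satisfaction of $S$ follows by chaining witnesses $u_m\in L_{a_m}(S)$ along a composite path, exactly as verified in the proof of Theorem~\ref{thm:auto correctness}.
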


\begin{theorem}
\label{thm:Prove-sound-complete}
Let $S$ be a context-free closed semi-Thue system.  
For every formula $F$, $Prove(S, \{F\})$ returns $\top$ if and only if
$F$ is provable in $\DKm(S).$
\end{theorem}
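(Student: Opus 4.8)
The plan is to read each clause of $Prove_2$ as the backward application of a $\DKm(S)$ rule --- step 1 as $\idd$, step 3 as $\lord$ and $\landd$, step 4 as $p_S$ (its side condition being exactly the proviso of $p_S$), and steps 5--6 as $\boxrd{a}$ --- and to prove the two directions separately. For soundness I would show, by induction on the recursion, that $Prove_2(S,\Gamma,k)=\top$ entails provability of $\Gamma$ in $\DKm(S)$: step 1 is the base case, each single recursive call supplies the premise of the matching rule, and the two calls of step 3b supply the two premises of $\landd$; assembling the subderivations proves $\Gamma$. Since $Prove$ returns $\top$ only when some $Prove_2(S,\{F\},k)$ does, this yields provability of $F$.

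For completeness I would first note that every rule of $\DKm(S)$ is invertible: in each rule the premise is the conclusion with extra formulas, or a structure $\seq{a}{A}$, inserted into a context, so invertibility is an instance of the weakening Lemma~\ref{lm:weak}. Hence every sequent produced by the backward search from a provable $\{F\}$ is again provable; combining $\DKm(S)\equiv\SKm(S)\equiv\Km(S)$ (Theorems~\ref{thm:SKm-DKm-equiv} and~\ref{thm:SKm-Km-equiv}) with the soundness half of Theorem~\ref{thm:Km-S}, every such sequent is $S$-valid and so cannot be falsified, whence by the contrapositive of Lemma~\ref{lm:S-stable} it is not $S$-stable. Thus step 2 never fires and $Prove_2$ never returns $\bot$: each call returns $\top$ or $\star$.

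It remains to rule out $\star$ under a suitable depth budget. Because a node is first expanded while still a leaf, hence by the height-bounded step 6, every internal node lies at height $\le k$ and every node at height $\le k+1$, so the search only builds sequents of nesting depth $\le k+1$. As the weakening construction inserts passive material and hence preserves nesting depth, the property ``provable within nesting depth $\le k+1$'' is preserved by every backward step. Fix a proof of $F$ of nesting depth $d$ and set $k=d-1$, so this property holds throughout the search. I claim step 7 is then unreachable: a sequent $\Gamma^*$ reaching it is saturated, $S$-propagated and internally realised, with every unrealised $\obox a A$ confined to height-$(k+1)$ leaves. Since a single edge already witnesses $a\in L_a(S)$, $S$-propagation also blocks the native rules $\diarup{a}$ and $\diardn{a}$, so the only candidate rules are $\boxrd{a}$-steps on those leaves --- barred by the depth bound. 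Hence no rule of $\DKm(S)$ applies to $\Gamma^*$ within depth $k+1$, forcing its depth-$\le k+1$ derivation to be a single $\idd$; that is, some node of $\Gamma^*$ holds a pair $p,\neg p$, contradicting the failure of step 1. As the search terminates (bounded depth and branching) and never returns $\bot$, every call returns $\top$; in particular $Prove_2(S,\{F\},d{-}1)=\top$, and so $Prove$ returns $\top$.

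The main obstacle is the last paragraph, and two points in it deserve care. First, Lemma~\ref{lm:weak} is stated as height-preserving for derivations; one must check that its construction also preserves the \emph{nesting} depth, which is immediate once one observes that nesting depth counts only the $\seq{\cdot}{\cdot}$-structure and that weakened material is carried passively, so that ``provable within depth $k+1$'' is genuinely invariant along the search. Second, the ``no applicable rule'' step hinges on the subsumption of $\diarup{a}$ and $\diardn{a}$ by $p_S$; without it an $S$-propagated $\Gamma^*$ could still admit a diamond rule and the contradiction would collapse. Granting these, the off-by-one is routine: the height-$d$ leaves created by realising height-$(d{-}1)$ nodes are exactly those the minimal-depth derivation closes by $\idd$ without a further modal step, so step 6's bound $k=d{-}1$ is never an impediment.
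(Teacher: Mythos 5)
Your soundness direction and your argument that $\bot$ is never returned on a provable input (every sequent in the search keeps $F$ at its root and is provable, hence valid, hence not $S$-stable by Lemma~\ref{lm:S-stable}) coincide with the paper's proof. Where you genuinely diverge is in showing that some finite $k$ yields $\top$: the paper fixes a minimal $\DKm(S)$-derivation $\Pi$ of $F$ and argues that, since every backward rule application retains its principal formula and all rules are invertible, the fair saturation strategy of $Prove_2$ eventually covers every rule application of $\Pi$; you instead maintain the invariant ``provable within nesting depth $\le k+1$'' and derive a contradiction at step~7. Your route is arguably cleaner in that it isolates exactly what the depth parameter $k$ is for, and it localises the model-theoretic input to Lemma~\ref{lm:S-stable} alone; the paper's route avoids having to reason about what a stuck sequent looks like.

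There is, however, one step you cannot leave as stated: ``no rule of $\DKm(S)$ applies to $\Gamma^*$ within depth $k+1$, forcing its depth-$\le k+1$ derivation to be a single $\idd$.'' Backward rule applications are never literally blocked in $\DKm(S)$ --- $\lord$, $\landd$, the diamond rules and $p_S$ can always be applied to a saturated, $S$-propagated sequent; they merely produce premises that differ from $\Gamma^*$ by duplicated formulas (or, for $\boxrd{a}$ at a realised internal node, by a child $\seq{a}{A}$ subsumed by an existing one). So a depth-bounded derivation of $\Gamma^*$ need not end in $\idd$, and the contradiction does not follow directly. To repair it, take a derivation of $\Gamma^*$ of minimal height among those of nesting depth $\le k+1$ and show its last rule must be $\idd$: in every other case the premise reduces back to $\Gamma^*$ by height-preserving $\mathit{actr}$ and $m$ (Lemma~\ref{lm:adm-medial}), contradicting minimality --- noting also that these transformations, like the weakening of Lemma~\ref{lm:weak}, only delete or merge structure and so respect the nesting-depth bound. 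The only case this normalisation cannot dispose of is $\boxrd{a}$ at an unrealised leaf, which is exactly the case excluded by the depth budget. With that patch (and your correct observation that $a\in L_a(S)$ makes $S$-propagation subsume $\diarup{a}$ and $\diardn{a}$), the argument goes through.
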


We next show that $Prove(S,\Gamma)$ terminates when $S$ is regular.
The key is to bound the size of $S$-stable sequents, hence the
non-deterministic iterative deepening will eventually find
an $S$-stable sequent, when $\Gamma$ is not provable. 

\begin{theorem}
\label{thm:grammar-proc-terminates}
Let $S$ be a regular closed semi-Thue system over an alphabet $\Sigma$. 
Then for every formula $F$, the procedure $Prove(S, \{F\})$ terminates. 
\end{theorem}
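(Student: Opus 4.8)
The plan is to prove termination of the iterative-deepening wrapper $Prove$ in two steps: first, that each individual call $Prove_2(S,\Gamma,k)$ halts on every nondeterministic branch; and second, that there is a threshold $k^{*}$ for which $Prove_2(S,\{F\},k^{*})$ returns $\top$ or $\bot$ rather than $\star$. Since $Prove$ merely increments $k$ until it receives a non-$\star$ verdict, these two facts together give termination.

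For the first step I would argue by a uniform size bound. Every rule fired in steps 3--6 strictly enlarges the current nested sequent: steps 3 and 4 insert a formula into a node, and steps 5--6 attach a child subtree. At the same time the explored sequents are uniformly bounded, since every formula appearing in any node is a subformula of $F$ (the logical and propagation rules only decompose or copy formulae already present), so each node carries at most $N=|\mathrm{Sub}(F)|$ formulae and ranges over at most $2^{N}$ contents; the out-degree of a node is bounded by the number of boxed subformulae; and the height is capped at $k+1$ by the side condition of step 6. Hence the tree under construction has boundedly many nodes, the sequent grows strictly with each recursive call, and the recursion is well-founded.

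The crux is the second step, where regularity of $S$ enters. Fixing an FSA $\Acal=(\Sigma,Q,I,F_{\Acal},\delta)$ with $L(\Acal)=\bigcup_a L_a(S)$, I attach to each node $x$ of a sequent $\Gamma$ its \emph{augmented state} $\sigma(x)=(\Gamma|x,\Lambda(x))$, where $\Lambda(x)\subseteq\mathrm{Sub}(F)\times Q$ records, for each formula $A$, the set of automaton states reachable at $x$ by an $\Acal$-labelled propagation path, exactly as tracked by the rules $\{i,f,\tup,\tdn\}$ of $\DKm(\Acal)$. There are at most $2^{N}\cdot 2^{N|Q|}=2^{N(1+|Q|)}$ distinct augmented states, and this is precisely the bound that forces $Prove_1(\Acal,\cdot)$ to terminate in Theorem~\ref{thm:auto-proc-terminates}; I take $H=2^{N(1+|Q|)}$. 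I then claim that once $k\ge H$, whenever $Prove_2(S,\{F\},k)$ reaches a sequent $\Gamma$ on which none of steps 1--6 applies, $\Gamma$ is $S$-stable, so step 2 answers $\bot$. At such a $\Gamma$ every node is saturated and $S$-propagated and every internal node is realised, and the only unrealised leaves sit at height $k+1>k$. For each such leaf $x$, its branch has more than $H$ nodes, so by the pigeonhole principle two of them, an ancestor $y$ and $x$ itself, share an augmented state; I set $\lambda(x)=y$. Equality of the content components gives $\Gamma|x=\Gamma|\lambda(x)$, and equality of the $Q$-components is designed to deliver the remaining requirement of clause~4 of Definition~\ref{def:S-stable}, namely that $\Gamma$ is $(S,\Pcal)$-propagated for the folded automaton $\Pcal=\Rcal(\Gamma,y,z)[\vec x\assign\lambda(\vec x)]$ and all $y,z$.

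The main obstacle is exactly this last verification, which is where the correspondence between the language condition of $Prove_2$ and state-tracking in $\Acal$ must be made precise. Using the translation underlying Theorem~\ref{thm:DKm-A-eq-S}, the condition $\Rcal(\Gamma,p,q)\cap L_a(S)\neq\emptyset$ holds iff some $\Acal$-labelled path carries $init_a$ at $p$ to a final state at $q$, so that $S$-propagation of $\Gamma$ is equivalent to $\Acal$-propagation of its labelling. The delicate point is that the folded automaton now contains cycles, since each redirected edge into $x$ re-enters its ancestor $y$, and a priori a run wrapping around such a cycle might reach a final state at a node $q$ lacking the formula $A$. Equality of the $Q$-components $\Lambda(x)=\Lambda(y)$ rules this out: any state entering $y$ through a folded edge is already reachable at $y$ in the unfolded sequent, so by a fixed-point monotonicity argument the reachable node-state pairs in the folded automaton coincide with those in $\Gamma$ itself. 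Since $\Gamma$ is $\Acal$-propagated, every such final-state obligation $A\in\Gamma|q$ was already discharged, and the folded propagation condition follows. Thus $\Gamma$ is $S$-stable, $Prove_2$ returns a non-$\star$ verdict for all $k\ge H$, and $Prove(S,\{F\})$ terminates.
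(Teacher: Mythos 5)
Your overall strategy --- bound the size of a refuting $S$-stable sequent by the number of pairs (node content, set of reachable automaton states) and let the iterative deepening find it --- is in the spirit of the paper, and your verification of clause~4 of Definition~\ref{def:S-stable} (tracking a run of $\Acal$ along a word accepted by the folded automaton and using equality of the state components at the folding points to push the run back into the unfolded sequent) is essentially the argument the paper gives. The genuine gap is in how you obtain the function $\lambda$. The pigeonhole principle applied to a branch with more than $H$ nodes yields two nodes \emph{somewhere} on that branch with the same augmented state; it does not yield that the unrealised leaf $x$ itself agrees with one of its ancestors, which is exactly what Definition~\ref{def:S-stable}(4) demands ($\lambda(x_m)$ must be an ancestor of $x_m$ with $\Gamma|x_m = \Gamma|\lambda(x_m)$). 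The repeated pair could consist of two internal nodes at small height, while the leaf --- whose content is generated by a different box formula and receives propagated formulae only from its parent, never from children --- need match no ancestor at all. Since $Prove_2$ contains no loop check of its own (step~6 expands every unrealised leaf of height $\le k$ unconditionally, the only escape being step~2), nothing prevents a run from getting stuck at height $k+1$ with a sequent that is \emph{not} $S$-stable and returning $\star$, for every $k$. Your claim that every stuck sequent at depth at least $H$ is $S$-stable is therefore unproved, and with it the existence of the threshold $k^{*}$.

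The paper closes precisely this hole by simulation rather than pigeonhole: it runs $Prove_1(\Acal,\{F\})$, whose step~(vi) refuses to expand a leaf that already matches an ancestor, so that the resulting $\Acal$-stable sequent $\Gamma'$ has the leaf-loops-to-ancestor property \emph{by construction} (via Theorem~\ref{thm:auto-proc-terminates}); it then shows that some nondeterministic run of $Prove_2(S,\{F\},d)$, for $d$ at least the height of $\Gamma'$, reaches the label-free version $\Delta$ of $\Gamma'$, and that $\Delta$ is $S$-stable, taking $\lambda(x)$ to be the closest ancestor that $Prove_1$ looped $x$ to and then running the state-tracking argument you also give. (The provable case is discharged separately via Theorems~\ref{thm:DKm-A-eq-S} and~\ref{thm:Prove-sound-complete}.) To repair your proof you would either have to reinstate such a simulation, or replace the one-line pigeonhole by an argument that some run of $Prove_2$ produces a sequent in which \emph{each} unrealised leaf repeats an ancestor's augmented state; the latter is exactly what the explicit loop check in $Prove_1$ buys and what $Prove_2$, lacking labels and a blocking condition, does not provide for free.
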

The proof relies on the fact that there exists a minimal FSA $\Acal$ encoding $S$, so one
can simulate steps of $Prove_1(\Acal,\{F\})$ in $Prove(S,\{F \}).$ 
It is not difficult to show that if a run of $Prove_1(\Acal,\{F\})$ reaches 
a $\Acal$-stable nested sequent $\Gamma'$, then one can find a $k$ such that
a run of $Prove_2(S,\{F\},k)$ reaches a saturated and $S$-propagated nested sequent 
$\Delta$, such that $\Gamma'$ and $\Delta$ are identical except for the
labeled formulae in $\Gamma'$. The interesting part is in showing that $\Delta$ is $S$-stable. 
The details are in the appendix.

The following is then a corollary of Theorem~\ref{thm:Prove-sound-complete} 
and Theorem~\ref{thm:grammar-proc-terminates}.  
\begin{corollary}
Let $S$ be a regular closed semi-Thue system over an alphabet $\Sigma$. 
Then the procedure $Prove$ is a decision procedure for $\DKm(S).$
\end{corollary}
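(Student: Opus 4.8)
The plan is to obtain this corollary as a direct consequence of Theorem~\ref{thm:Prove-sound-complete} and Theorem~\ref{thm:grammar-proc-terminates}, the only genuine work being to verify that their combination meets the two requirements of a decision procedure: that $Prove(S,\{F\})$ halts on every input $F$, and that upon halting it returns a value that correctly classifies $F$ as provable or not provable in $\DKm(S).$

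First I would record that the outer procedure $Prove$ can only ever return $\top$ or $\bot.$ Inspecting its three steps, $Prove$ returns precisely when the call $Prove_2(S,\{F\},k)$ yields $\top$ or $\bot$; otherwise (when $Prove_2$ returns $\star$) it merely increments $k$ and loops. Thus the sole remaining question about its behaviour is whether this loop terminates. Since $S$ is regular by hypothesis, Theorem~\ref{thm:grammar-proc-terminates} guarantees that $Prove(S,\{F\})$ terminates for every $F$, so the procedure is total, and on termination it outputs one of $\top,\bot.$

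It then remains to argue correctness in both directions. By Theorem~\ref{thm:Prove-sound-complete}, $Prove(S,\{F\})$ returns $\top$ if and only if $F$ is provable in $\DKm(S)$, which settles the affirmative case outright. For the negative case I would combine totality with this equivalence: given any $F$, the procedure halts and returns $\top$ or $\bot$; if $F$ is provable then by the equivalence it returns $\top$, so by the dichotomy it returns $\bot$ exactly when $F$ is not provable. Hence $Prove$ correctly decides provability in $\DKm(S)$, which is precisely what it means to be a decision procedure for $\DKm(S).$

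The main obstacle is not in this final assembly, which is essentially bookkeeping, but in the two cited theorems on which it rests: completeness (Theorem~\ref{thm:Prove-sound-complete}) depends on extracting a countermodel from an $S$-stable sequent via Lemma~\ref{lm:S-stable}, and termination (Theorem~\ref{thm:grammar-proc-terminates}) depends on bounding the size of $S$-stable sequents by simulating runs of the automata-based procedure $Prove_1(\Acal,\{F\}).$ Once those are in hand, no further case analysis is needed here; the one point requiring care is simply that the outer loop emits only $\top$ and $\bot$, so that termination plus the one-directional equivalence of Theorem~\ref{thm:Prove-sound-complete} suffices to pin down the $\bot$ case as well.
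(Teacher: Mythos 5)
Your proposal is correct and matches the paper's own treatment, which derives the corollary directly from Theorem~\ref{thm:Prove-sound-complete} (soundness and completeness of $Prove$) and Theorem~\ref{thm:grammar-proc-terminates} (termination for regular $S$). The extra bookkeeping you supply --- that the outer loop emits only $\top$ or $\bot$, so termination plus the one-directional equivalence pins down the $\bot$ case --- is exactly the implicit glue the paper leaves to the reader.
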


\section{Conclusion and future work}

Nested sequent calculus is closely related to display calculi,
allowing us to benefit from well-studied proof theoretic techniques in
display calculi, such as Belnap's generic cut-elimination procedure,
to prove cut-elimination for $\SKm(S)$.  At the more practical end, we
have established via proof theoretic means that nested sequent
calculi for regular grammar logics can be effectively mechanised.
This work and our previous work~\cite{Gore09tableaux,Gore11LMCS}
suggests that nested sequent calculus could potentially be a good
intermediate framework to study both proof theory and decision
procedures, at least for modal and substructural logics.

Nested sequent calculus can be seen as a special case of labelled sequent calculus,
as a tree structure in a nested sequent can be encoded using labels and accessibility
relations among these labels in labelled calculi. 
The relation between the two has recently been established in \cite{ramanayake11phd}, where the authors
show that, if one gets rid of the frame rules in labelled calculi and structural rules in nested sequent calculi,
there is a direct mapping between derivations of formulae between the two frameworks. 
However, it seems that the key to this connection, i.e., admissibility of the frame rules, 
has already been established in Simpson's thesis~\cite{simpson94phd},\footnote{Simpson's results 
are shown for intuitionistic modal logics, but it is straightforward to apply the techniques shown
there to classical modal logics}
where he shows admissibility of a class of frame rules (specified via Horn clauses) 
in favor of propagation rules obtained by applying a 
closure operation on these frame rules. The latter is similar to our notion of propagation rules.
Thus it seems that structural rules in (shallow) nested sequent calculus play a similar role
to the frame rules in labelled calculi. We plan to investigate this connection
further, e.g., under what conditions the structural rules are admissible in deep inference 
calculi, and whether those conditions translate into any meaningful characterisations in terms
of (first-order) properties of frames. 

The two decision procedures for regular grammar logics we have
presented are not optimal. As can be seen from the termination 
proofs, their complexity is at least EXPSPACE.  
We plan to refine the procedures further to achieve optimal
EXPTIME complexity, e.g, by extending our deep nested 
sequent calculi with ``global caching'' techniques from tableaux
systems~\cite{GoreW09}.

\paragraph{Acknowledgment} The authors would like to thank an anonymous reader
of a previous draft for his/her detailed and useful comments. 
The first author is supported by the Australian Research Council Discovery Grant
DP110103173.

\bibliography{biblio}
\bibliographystyle{aiml12}

\appendix
\newcommand{\thmhead}[2]{\noindent {\bf #1 \ref{#2}.}}

\section{Proofs}

\thmhead{Theorem}{thm:Km-S}
A formula $F$ is $S$-valid iff $F$ is provable in $\Km(S).$
\begin{proof}
The soundness and completeness proofs follow 
the same proofs in \cite{baldoni98phd} for axiomatisations
of grammar logics without converse. The soundness proof
is quite straightforward so we omit them.
For the completeness proof, it is enough to show that
the construction of canonical models in \cite{baldoni98phd}
additionally satisfies the residuation axiom, and the
rest of the proof is the same. 
The canonical models are defined using the notion of maximal consistent sets.
A formula $A$ is said to be consistent if $\neg A$ is not provable
in $\Km(S).$ A finite set of formulae is consistent if the conjuction of all of
them is consistent, and an infinite set is consistent if every finite subset of
it is consistent. A set of formulae $\Scal$ is maximally consistent if it is consistent
and for every formula $A$, either $A \in \Scal$ or $\neg A \in \Scal.$
Following \cite{baldoni98phd}, it can be shown that a maximal consistent set $\Scal$
satisfies, among others, the following:
\begin{itemize}
\item There is no formula $A$ such that $A \in \Scal$ and $\neg A \in \Scal.$
\item If $A \in \Scal$ and $A \impl B \in \Scal$ then $B \in \Scal.$
\item If $A$ is provable in $\Km(S)$ then $A \in \Scal.$
\end{itemize}
We now define the canonical model $\Mf_c = \langle W, \{R_a\}_{a\in \Sigma}, V \rangle$ 
as follows:
\begin{itemize}
\item $W$ is the set of all maximal consistent sets. 
\item For every $a \in \Sigma$, 
$R_a = \{(w,w') \mid w_a \subseteq w' \}$ 
where $w_a = \{A \mid [a]A \in w \}.$
\item For each propositional variable $p$, $V(p) = \{w \mid p \in w\}.$
\end{itemize}
It is enough to show that $R_a = R_{\bar a}^{-1}$, i.e., that $R_a$ is the inverse
of $R_{\bar a}.$ This is proved by contradiction. 

Suppose otherwise, i.e., there exists $w$ and $w'$ such that
$(w,w') \in R_a$ but $(w',w) \not \in R_{\bar a}.$
This means that there exists $[\bar a]A \in w'$ such that $A \not \in w.$
Because $w$ is maximally consistent, we have $\neg A \in w.$
Since we have an instance of the residuation axiom $\neg A \impl \obox{a} \odia{\bar a} \neg A \in w$
and since maximally consistent sets are closed under modus ponens, 
we also have $\obox{a}\odia{\bar a}\neg A \in w.$
Because $(w,w') \in R_a$, the latter implies that $\odia{\bar a} \neg A \in w'.$
But this means $\odia{\bar a}\neg A = \neg (\obox{\bar a} A) \in w'$, contradicting
the consistency of $w'.$
The rest of the proof then proceeds as in \cite{baldoni98phd} (Chapter II).
Briefly, one shows that for every $w$ and $A$, if $A \in w$ then
$\Mf_c, w \models A$. Now if $A$ is $S$-valid but not provable in $\Km(S)$, then $\neg \neg A$
is not provable either. This means $\neg A$ is in some maximal consistent set $w$, 
and therefore $\Mf_c, w\models \neg A$, and $\Mf_c, w \not \models A$,
contradicting the validity of $A.$ 
\qed
\end{proof}

\thmhead{Theorem}{thm:SKm-Km-equiv}
The system $\SKm(S)$ and $\Km(S)$ are equivalent. 
\begin{proof}
{\em (Outline).}
In one direction, from  $\SKm(S)$ to $\Km(S)$, we show that,
for each inference rule of $\SKm(S)$, if the formula interpretation of
the premise(s) is valid then the formula interpretation of the conclusion
is also valid. For the converse, it is enough to show that all axioms
of $\Km(S)$ are derivable in $\SKm(S).$
It can be shown that both the residuation axioms and the axioms 
generated from $S$ can be derived using the structural rules $r$ and
$\rho(S).$ For example, suppose $S$ contains the axiom $[a][b] p \impl [c][d] p.$ Then 
the (nnf of the) axiom can be derived as shown in the figure on the right
(where a double-line indicates one or more application of rules):
$$
\infer[\lor]
{\odia a \odia b \neg p \lor \obox c \obox d p}
{
\infer[\obox c]
{\odia a \odia b \neg p, \obox c \obox d p}
{
 \infer[r]
 {\odia a \odia b \neg p, \seq c {\obox d p}}
 {
  \infer[\obox d]
  {\seq{\bar c}{\odia a \odia b \neg p}, \obox d p}
  {
    \infer[r]
    {\seq{\bar c}{\odia a \odia b \neg p}, \seq{d} p}
    {
      \infer[\rho(S)]
      {\seq{\bar d}{\seq{\bar c}{\odia a\odia b \neg p}}, p}
      {
       \infer=[r]
       {\seq{\bar b}{\seq{\bar a}{\odia a\odia b\neg p}}, p}
       {
         \infer[\odia a]
         {\odia a\odia b \neg p, \seq a {\seq b p}}
         {
          \infer[r]
          { \seq a {\odia b \neg p, \seq b p}}
          {
            \infer[\odia b]
            {\seq {\bar a} {~}, \odia b \neg p, \seq b p}
            {
              \infer[r]
              {\seq a {~}, \seq b {\neg p, p}}
              {
                \infer[id]
               {\seq {\bar b} {\seq{\bar a}{~}}, \neg p, p} {}
              }
            }
          }
         }
       }
      }
    }
  }
 }
}
}
$$
\end{proof}

\thmhead{Lemma}{lm:adm-r}
The rule $r$ is height-preserving admissible in $\DKm(S).$
\begin{proof}
Suppose $\Pi$ is a derivation of $\Gamma, \seq a \Delta$.
We show by induction on $|\Pi|$ that there exists a derivation
$\Pi'$ of $\seq {\bar a} \Gamma, \Delta$ such that $|\Pi| = |\Pi'|.$
This is mostly straightforward, except for the case where
$\Pi$ ends with a propagation rule. In this case, it is enough
to show that the propagation automata for $\Gamma, \seq a \Delta$
is in fact exactly the same as the propagation automata of $\seq {\bar a} \Gamma, \Delta$.
\end{proof}

\thmhead{Lemma}{lm:adm-medial}
The rules $\mathit{actr}$ and $m$ are height-preserving admissible. 
\begin{proof}
Admissibility of $\mathit{actr}$ is trivial.
To show admissibility of $m$, the non-trivial case is when we need to permute
$m$ over $p_S.$
Suppose $\Pi$ is a derivation of $\Gamma[\seq a {\Delta_1}, \seq a {\Delta_2}]$
ending with a propagation rule. 
Suppose $i$ is the node where $\Delta_1$ is located and $j$ is the node
where $\Delta_2$ is located. 
If $\Pcal$ is a propagation automata between nodes $k$ and $l$ in $\Gamma[\seq a {\Delta_1}, \seq a {\Delta_2}]$,
then $\Pcal[j \assign i]$ is a propagation automata between nodes $k[j \assign i]$ and $l[j \assign i]$ in 
$\Gamma[\seq a {\Delta_1,\Delta_2}]$. So all potential propagations of diamond formulae
are preserved in the conclusion of $m.$ So $m$ can be permuted up over the propagation rule
and by the induction hypothesis it can be eventually eliminated.
\end{proof}

\thmhead{Lemma}{lm:adm-rho-S}
The structural rules $\rho(S)$ of $\SKm(S)$ are height-preserving 
admissible in $\DKm(S).$
\begin{proof}
Suppose $\Pi$ is a derivation of $\Gamma[\seq a \Delta]$.
We show that there is a derivation $\Pi'$ of
$\Gamma[\seq {u} \Delta]$, where $u = a_1\cdots a_n$
such that $a \prod u \in S.$
This is mostly straightforward except when $\Pi$ ends with
a propagation rule. 
Suppose the hole in $\Gamma[~]$ is located at node $k$ and
$\Delta$ is located at node $l$, with $k \next{a} l.$
In this case we need to show that
if a diamond formula $\odia b A$ can be propagated from a node $i$ to node $j$
in $\Gamma[\seq {a} \Delta]$ then there is also a propagation
path between $i$ and $j$ in $\Gamma[\seq {u} \Delta]$ for the same formula.
Suppose $\Pcal_1$ is the propagation automata $\Rcal(\Gamma[\seq {a} \Delta],i,j).$
Then the propagation automata $\Pcal_2 = \Rcal(\Gamma[\seq u \Delta],i,j)$
is obtained from $\Pcal_1$ by adding $n-1$ new states $k_1,\dots,k_{n-1}$
between $k$ and $l$, and the following transitions: 
$k \trans{a_1} k_1$, $k_1 \trans{a_{m+1}} k_{m+1}$, for $2 \leq m < n$
and $k_{n-1} \trans{a_n} l$, and their dual transitions.

Suppose $i \trans{v} j$ is a propagation path in $\Gamma[\seq a \Delta].$
If 
$v$ does not go through the edge $k \next{a} l$ (in either direction, up or down) 
then the same path also exists in $\Gamma[\seq {u} \Delta].$
If it does pass through $k \next{a} l$, then the path 
must contain one or more transitions of the form $k \trans{a} l$ or $l \trans{\bar a} k$.
Then one can simulate the path $i \trans{v} j$ with a path $i \trans{v'} j$ in $\Pcal_2$, where
$v'$ is obtained from $v$ by replacing each $k \trans{a} l$ with $k \trans{u} l$ and
each $l \trans{\bar a} k$ with $l \trans{\bar u} k.$
It remains to show that $v' \in \Pcal_2 \cap L_b(S).$ But this follows
from the fact that $a \prod u \in S$ and $\bar a \prod \bar u \in S$ (because $S$ is a
closed), so $v \prods_S v' \in L_b(S).$
\end{proof}

\thmhead{Theorem}{thm:SKm-DKm-equiv}
For every context-free closed semi-Thue system $S$, 
the proof systems $\SKm(S)$ and $\DKm(S)$ are equivalent.
\begin{proof}
One direction, from $\SKm(S)$ to $\DKm(S)$ follows from the admissibility of
structural rules of $\SKm(S)$ in $\DKm(S).$ 
To show the other direction, given a derivation $\Pi$ in $\DKm(S)$,  we show, by induction on the number
of occurrences of $p_S$, with a subinduction on the height of $\Pi$, that $\Pi$ can be transformed
into a derivation in $\SKm(S).$ As rules other than $p_S$ can be derived directly in $\SKm(S)$, the
only interesting case to consider is when $\Pi$ ends with $p_S$: 
$$
\infer[p_S, \mbox{ where $\Rcal(\Gamma[~]_i[~]_j, i, j) \cap L_a(S) \not = \emptyset$}]
{\Gamma[\odia a A]_i[\emptyset]_j}
{\Gamma[\odia a A]_i[A]_j}
$$ 
and $\Gamma[\odia a A]_i[A]_j$ is derivable via a derivation $\Pi'$ in $\DKm(S).$
Choose some $u \in \Rcal(\Gamma[~]_i[~]_j, i, j) \cap L_a(S).$ 
Then we can derive the implication $\odia u A \impl \odia a A$ in $\SKm(S)$. 
Using this implication, the display property and the cut rule, it can be shown that 
the following rule is derivable in $\SKm(S).$
$$
\infer[d]
{\Gamma[\odia a A]}
{\Gamma[\odia a A, \odia u A]}
$$
Then we show that  
the rule $p_S$ can be simulated by the derived rule $d$ above, with
chains of $\odia a$-rules in $\SKm(S)$, and utilising the weakening lemma
(Lemma~\ref{lm:weak}). 

Suppose $u = a_1 \cdots a_n$. Then there are nodes $s_1,\dots,s_n$ in $\Gamma[]_i[]_j$, 
with $s_1 = i$ and $s_n=j$, such that the following is a path in the
propagation automaton $\Rcal(\Gamma[~]_i[~]_j,i,j)$:
$$
i = s_1 \trans{a_1} s_2 \trans{a_2} \cdots s_{n-1} \trans{a_n} s_n = j
$$
Now instead of propagating $A$ using $p_S$ applied to $\odia a A$, we can propagate $A$ in stages 
using $\odia u A$ and the diamond rules $\odia {a_1}, \ldots, \odia {a_n}.$
Let $\Gamma'[]_i[]_j$ be a context obtained from $\Gamma[]_i[]_j$ by 
adding the formula $\odia {a_1} \cdots \odia{a_{n-k + 1}} A$ to node $s_k$, for each $1 \leq k \leq n.$
Then it can be shown, by induction on $n$, that we have a derivation
$$
\deduce{\Gamma[\odia a A, \odia u A]_i[\emptyset]_j}
{\deduce{\vdots}{\Gamma'[\odia a A, \odia u A]_i[A]_j}}
$$
in $\DKm(S)$ using only the diamond rules $\odia {a_1}, \dots, \odia{a_n}.$ Note that
as these are diamond rules, not $p_S$, they can be simulated in $\SKm(S)$, so the above
derivation can be simulated as well in $\SKm(S).$
By the weakening lemma (Lemma~\ref{lm:weak}), we can construct a derivation
$\Psi$ of $\Gamma'[\odia a A, \odia u A]_i[A]_j$, such that the height of $\Pi'$
is the same as $\Psi.$ 
So by the induction hypothesis we have a derivation $\Psi'$ of 
$\Gamma'[\odia a A, \odia u A]_i[A]_j$ in $\SKm(S).$ The final derivation in $\SKm(S)$ is thus
constructed by chaining the above derivations:
$$
\infer[d]
{\Gamma[\odia A]_i[\emptyset]_j}
{
\deduce{\Gamma[\odia a A, \odia u A]_i[\emptyset]_j}
{\deduce{\vdots}{\deduce{\Gamma'[\odia a A, \odia u A]_i[A]_j}{\Psi'}}} 
}
$$
\end{proof}

\thmhead{Theorem}{thm:DKm-A-eq-S}
Let $S$ be a regular closed semi-Thue system over $\Sigma$ and let $\Acal$ be a 
FSA representing the regular language generated by $S$ and $\Sigma.$
Then $\DKm(S)$ and $\DKm(\Acal)$ are equivalent.
\begin{proof}
{\em (Outline).} 
To show that if a formula $B$ is provable in $\DKm(S)$ then $B$ is provable in
$\DKm(\Acal)$ we will demonstrate that given a proof of $B$ in
$\DKm(S)$ it is possible to replace the highest application of a propagation rule
from $\DKm(S)$ with a sequence of propagation rules from $\DKm(\Acal)$. As
all non-propagation rules between the two systems are identical, this will be
sufficient to show that a proof in $\DKm(S)$ can be translated to a proof in
$\DKm(\Acal)$.

Suppose we have a derivation $\Pi$ of $\Gamma[\odia a A]_i[A]_j$ using only the
rules of $\DKm$. If $p_S$ is applicable and yields $\Gamma[\odia a
A]_i[\emptyset]_j$, it must be the case that $\Rcal(\Gamma[]_i[]_j,i,j)\cap
L_a(S)\neq\emptyset$. Therefore there exists a sequence of transitions
in $\Rcal(\Gamma[~]_i[~]_j,i,j)$: 
$
i \trans{a_1} i_1 \trans{a_2} \cdots \trans{a_{n-1}} i_{n-1} \trans{a_n} j, 
$
where $a_1\cdots a_n \in L_a(S)$ and 
where each $i_k$, for $1 \leq k \leq n-1$, is a node in $\Gamma[~]_i[~]_j$
and 
\begin{itemize}
\item either $i \next{a_1} i_1$ or $i_1 \next{\bar a_1} i$, 
\item either $i_{k-1} \next{a_{k}} i_{k}$ or $i_{k} \next{\bar a_k} i_{k-1}$, for $2 \leq k < n-1$
\item and either $i_{n-1} \next{a_n} j$ or $j \next{\bar a_n} i_{n-1}.$
\end{itemize}
Since $\Acal$ accepts $L_a(S)$, there must exist a sequence of transitions
in $\Acal$ such that: 
$
init_a \trans{a_1} s_1 \trans{a_2} \cdots \trans{a_{n-1}} s_{n-1} \trans{a_n} f, 
$
where $f$ is a final state in $\Acal.$
The propagation path $a_1\cdots a_n$ can then be simulated in $\DKm(\Acal)$ as follows.
First, define a sequence of nested sequents as follows:
\begin{itemize}
\item $\Gamma_0 := \Gamma[\odia a A]_i[\emptyset]_j$, $\Gamma_1 := \Gamma[\odia a A, init_a : A]_i[\emptyset]_j.$
\item $\Gamma_{k+1} := \Gamma_k(i_k \addtree \{s_k : A\})$, for $1 \leq k \leq n-1$.
\item $\Gamma_{n+1} := \Gamma_{n}(j \addtree \{f : A\})$ and $\Gamma_{n+2} := \Gamma_{n+1}(j \addtree \{A\}).$
\end{itemize}
Then $\Gamma_0$ can be obtained from $\Gamma_{n+2}$ by a series of applications of
propagation rules of $\DKm(\Acal)$. That is, $\Gamma_0$ is obtained from $\Gamma_1$
by applying the rule $i$; $\Gamma_k$ is obtained from $\Gamma_{k+1}$ by applying
either the rule $\tdn$ or $\tup$, for $1 \leq k \leq n-1$, at node $i_{k}$ and
$\Gamma_{n}$ is obtained from $\Gamma_{n+1}$ by applying the rule $\tdn$ or $\tup$ at node $j$,
and $\Gamma_{n+1}$ is obtained from $\Gamma_{n+2}$ by applying the rule $f$ at node $j.$
Note that $\Gamma_{n+2}$ is a weakening of $\Gamma[\odia a A]_i[A]_j$ with labeled
formulae spread in some nodes between $i$ and $j.$
It remains to show that $\Gamma_{n+2}$ is derivable. This is obtained simply by
applying weakening (Lemma~\ref{lm:weak}) to $\Pi.$ 

For the other direction, assume we have a $\DKm(\Acal)$-derivation $\Psi$ of $B$. 
We show how to construct a derivation $\Psi'$ of $B$ in $\DKm(S).$
The derivation $\Psi'$ is constructed as follows:
First, remove all labelled formulae from $\Psi$; then remove the rules
$\tup$, $\tdn$ and $i$, and finally, replace the rule $f$ with $p_S.$
The rules $\tup$, $\tdn$ and $i$ from $\Psi$ simply disappear in $\Psi'$ because with 
labelled formulae removed, the premise and the conclusion of any of the rules
in $\Psi$ map to the same sequent in $\Psi'.$ Instances of the other rules in $\Psi$
map to the same rules in $\Psi'.$
We need to show that $\Psi'$ is indeed a derivation in $\DKm(S).$ The only non-trivial
case is to show that the mapping from the rule $f$  to the rule $p_S$ is correct, i.e.,
the resulting instances of $p_S$ in $\Psi'$ are indeed valid instances.

We first prove an invariant property that holds for $\Psi$.
We say that a nested sequent $\Delta$ is {\em $\Acal$-connected} iff the following hold
\begin{itemize}
\item If $init_a : C \in \Delta | i$ then $\odia a C \in \Delta | i.$
\item If $s : C \in \Delta | i$ and $s$ is not an initial state of $\Acal$, then
there exists an $a \in \Sigma$ and a sequence of nodes 
$x_1,\dots, x_n$ in $\Delta$ and a sequence of states $s_1,\dots,s_n$ of $\Acal$ such that
\begin{itemize}
\item $s_k : C \in \Delta | x_k$ for $1 \leq k \leq n.$
\item $s_1 = init_a$ and $x_n = i$. 
\item For each $1 \leq k < n$, 
$s_k \trans{b}_\Acal s_{k+1}$ for some $b \in \Sigma$, and
either $x_k \next{b} x_{k+1}$ or $x_{k+1} \next{\bar b} x_k.$
\end{itemize}
\end{itemize}
It is then easy to verify the following claim:
\begin{quote}{\bf Claim:}
If $\Delta$ is $\Acal$-connected and there is a derivation $\Xi$ 
of $\Delta$ in $\DKm(\Acal)$, then every nested sequent in $\Xi$ is
$\Acal$-connected. 
\end{quote}
Given the above claim, and the fact that the nested sequent $\{B\}$ is
trivially $\Acal$-connected, it follows that every nested sequent in $\Psi$ is
$\Acal$-connected. 
Now, it remains to show each instance of $f$ in $\Psi$ 
can be replaced by a valid instance of $p_S$ in $\Psi'.$
Suppose there is an instance of $f$ in $\Psi$ as shown below left: 
$$\infer[f]{\Gamma[s:A]_j}{\Gamma[s:A,A]_j}
\qquad
\infer[p_S]
{\Gamma''[\odia a A]_i[\emptyset]_j}{\Gamma''[\odia a A]_i[A]_j}
$$
Then we by the above claim, there must exist a node $i$ and an $a \in \Sigma$ 
such that $\odia a A \in \Gamma[s:A]_j | i$ and 
that there exist a sequence of nodes $i = x_1,\dots, x_n = j$
and a sequence of states $init_a = s_1, \dots, s_n = s$
such that $s_1 \trans{a_1}_\Acal s_2 \trans{a_2}_\Acal \cdots \trans{a_{n-1}}_\Acal s_n$
for some $a_1,\dots,a_{n-1}.$ It also follows from $\Acal$-connectedness
that $a_1\cdots a_{n-1}$ is an element of the propagation automata
$R(\Gamma[s:A]_j, i, j).$
Because $\Acal$ represents the regular languages $\{L_b(S) \mid b \in \Sigma \}$, we have
that $a_1 \cdots a_{n-1} \in L_a(S)$, and 
\begin{equation}
\label{eq:DKm-A-DKm-S}
a_1 \cdots a_{n-1} \in L_a(S) \cap R(\Gamma[s:A]_j, i,j).
\end{equation}
Let $\Gamma'[\odia a A]_i[s:A]_j = \Gamma[s:A]_j.$ 
Let $\Gamma''[~]_i[~]_j$ be the context obtained from $\Gamma'[~]_i[~]_j$
by removing all labelled formulae. Then (\ref{eq:DKm-A-DKm-S}) can
be rewritten as: 
$$a_1 \cdots a_{n-1} \in L_a(S) \cap R(\Gamma''[\odia a A]_i[\emptyset]_j, i, j).$$
Thus the propagation instance $p_S$ shown above right, to which the above instance of $f$ maps to,
is indeed a valid instance of $p_S.$
\end{proof}

\thmhead{Theorem}{thm:auto correctness}
If $Prove_1(\Acal,\{F\}) = \top$ then $F$ is provable in $\DKm(\Acal).$
If $Prove_1(\Acal,\{F\}) = \bot$ then $F$ is not provable in
$\DKm(\Acal).$
\begin{proof}
The proof of the first statement is straightforward, since the steps of $Prove_1$ are just
backward applications of rules of $\DKm(\Acal).$ 
To prove the second statement, we show that if 
$Prove_1(\Acal,\{F\})=\bot$ then there exists a model
$\mathfrak{M}=(W, R,V)$, where $R = \{R_a\}_{a\in\Sigma}$,
such that $\mathfrak{M}\ \slashed{\models}\ F.$ By the completeness of
$\DKm(\Acal)$, it will follow that $F$ is not provable in $\DKm(\Acal)$.

Since $Prove_1(\Acal,\{F\})=\bot$ the procedure must generate
an $\Acal$-stable $\Delta$, with $F$ in the root node of $\Delta.$ 
Let $W$ be the set of all the realised nodes of $\Delta$. For every pair 
$i,j \in W$, construct an automaton $\Pcal(i,j)$ by
modifying the propagation automaton $\Rcal(\Delta,i,j)$ by identifying every
unrealised node $k'$ with its closest ancestor $k$ such that
$\Delta|k=\Delta|k'$. That is, replace every transition of the form
$s\trans{a} k'$ with $s\trans{a} k$ and $k'\trans{a} s$
with $k\trans{a} s$. Then define $R_a(x,y)$ iff
$\Pcal(x,y)\cap L(\Acal_a)\neq\emptyset$, where $\Acal_a$ is
$\Acal$ with only $init_a$ as the initial state.

Suppose $S$ is a closed semi-Thue system that corresponds to $\Acal.$
Then $L(\Acal_b) = L_b(S)$ for every $b \in \Sigma.$
We first show that the $\Sigma$-frame $\langle W, R\rangle$ defined above
satisfies all the production rules in $S$ (see Definition~\ref{def:S-frame}).
Let $a \prod u \in S$, where $u = a_1 \cdots a_n.$
We need to show that $R_u \subseteq R_a.$
Suppose otherwise, that is, there is a sequence of worlds 
$x_1,\dots,x_{n+1}$ such that $x_i R_{a_i} x_{i+1}$ but
$(x_1,x_{n+1}) \not \in R_a.$ 
By the above construction, we have $R_b(x,y)$ iff $\Pcal(x,y) \cap L_b(S) \neq \emptyset$
for every $b \in \Sigma.$ 
So it follows that, for each pair $(x_i, x_{i+1})$,
there is a string $u_i \in \Rcal(x_i,x_{i+1}) \cap L_{a_i}(S).$
It also follows that we have a sequence of transitions 
$x_1 \trans{u_1\cdots u_n} x_{n+1}$ in $\Pcal(x_1,x_{n+1})$, 
by chaining the transitions $x_i \trans{u_i} x_{i+1}$ together. 
Because $a \prod u \in S$, and each $u_i \in L_{a_i}(S)$, we have
$$a \prod a_1\cdots a_n \prods u_1 \cdots u_n \in L_a(S).$$
So $u_1 \cdots u_{n}$ is in $\Rcal(\Delta,x_1,x_{n+1}) \cap L_a(S)$, and
therefore $(x_1,x_{n+1}) \in R_a$, contradicting the assumption. 

To complete the model, let $x \in V(p)$ iff $\neg p \in\Delta|x$. We claim that
for every $x \in W$ and every $A \in \Delta | x$, we have 
$\Mf, x \ \slashed{\models}\ A$. We shall prove this by induction on the
size of $A$. Note that we ignore the labelled formulae in $\Delta$; they
are just a bookeeping mechanism. 
As $F$ is in the root node of $\Delta$, 
this will also prove $\Mf\ \slashed{\models} F.$
We show here the interesting case involving the diamond operators.

Suppose $\odia a  A\in\Delta|x$. Assume for a
contradiction that $\mathfrak{M},x\models\odia a  A$. That is, $R_a(x,y)$
and $\mathfrak{M},y\models A.$ If $R_a(x,y)$ then
there is a accepting path $p_a(x,y)$ in $\Pcal(x,y)$ of the form:
$
x_0 \trans{a_1} x_1 \trans{a_2} x_2 \cdots x_{n-1} \trans{a_n} x_n,
$
where $x_0 = x$ and $x_n = y$
such that $u = a_1 \dots a_n \in L(\Acal_a).$
Then because $u \in L(\Acal_a)$, there must be a sequence of states 
$s_0,s_1,\dots,s_n$ of $\Acal$ such that $s_0 = init_a \in I$ and $s_n\in F$ and the transitions between states
$$
s_0 \trans{a_1} s_1 \trans{a_2} s_2 \cdots s_{n-1} \trans{a_n} s_n.
$$
We show by induction on the length of transtions that that $s_i : A \in \Delta|x_i$
for $0 \leq i \leq n.$
In the base case, because $\odia a A \in \Delta|x$, by $\Acal$-propagation,
we have $s_0 : A \in \Delta|x_0.$
For the inductive cases, suppose $s_i : A \in \Delta|x_i$, for $n > i \geq 0$. There are two cases to consider.
Suppose the transition $x_i \trans{a_{i+1}}_{\Pcal(x,y)} x_{i+1}$ is present in $\Rcal(\Delta,x,y).$
Then either $x_i \next{a_{i+1}} x_{i+1}$ or $x_{i+1} \next{\bar a_{i+1}} x_i.$
In either case, by $\Acal$-propagation of $\Delta$, we must have $s_{i+1} : A \in \Delta|x_{i+1}.$

If $x_i \trans{a_{i+1}}_{\Pcal(x,y)} x_{i+1}$ is not a transition in $\Rcal(\Delta,x,y),$
then this transition must have resulted from a use of a loop node. There are two subcases:
either $x_i$ or $x_{i+1}$ is the closest ancestor of a loop node $x'$ with $\Delta | x_i = \Delta | x'$
or, respectively, $\Delta | x_{i+1} = \Delta | x'.$
Suppose $x_i$ is the closest ancestor of $x'$ with $\Delta | x_i = \Delta | x'.$
By the definition of $\Pcal(x,y)$, 
this means we have $x' \trans{a_{i+1}} x_{i+1}$ in $\Rcal(\Delta,x,y).$
Because $\Delta | x_i = \Delta | x'$ and $s_i : A \in \Delta | x_i$, 
we have $s_i : A \in \Delta|x'.$ Then by $\Acal$-propagation, it must be the case that
$s_{i+1} : A \in \Delta|x_{i+1}.$
Suppose $x_{i+1}$ is the closest ancestor of $x'$ with $\Delta|x' = \Delta|x_{i+1}.$
Then $x_i \trans{a_{i+1}} x'$ is a transition in $\Rcal(\Delta,x,y).$
By $\Acal$-propagation, it must be the case that $s_{i+1}:A \in \Delta|x'$,
and therefore also $s_{i+1} : A \in \Delta|x_{i+1}.$

So we have $s_n : A \in \Delta|y.$ But, again by $\Acal$-propagation,
this means $A \in \Delta|y$ (because $s_n$ is a final state). 
Then by the induction hypothesis, we have $\Mf, y \slashed{\models} A$, contradicting
the assumption. 
\end{proof}

\thmhead{Theorem}{thm:auto-proc-terminates}
For every nested formula  $A$, $Prove_1(\Acal, \{A\})$ terminates. 
\begin{proof}
{\em (Outline)} 
We say that a nested sequent $\Gamma$ is a {\em set-based nested sequent} if
in every node of $\Gamma$, every (labelled) formula occurs at most once (a formula $C$
and its labelled versions are considered distinct). 
By inspection of the procedure $Prove_1$, it is clear that all the intermediate
sequents created during proof search for $Prove_1(\Acal,\{A\})$ 
are set-based sequents.

Steps (i) -- (iv) of the procedure only add (strict) subformulae of formulae occurring in
the input sequent without creating new nodes, so for a given input nested sequent, 
applications of these steps eventually terminate.
Because of the blocking conditions in each step, the same formula cannot be added twice
to a node, so the upper bound of the size of a node (i.e., the number of formulae in it)
is the cardinality of the set of all subformulae in the input sequent, plus all their possible
labellings (which is finite because $\Acal$ has only a finite number of states). 

Step (v) is applicable only to internal nodes which are not realised. So the expansion of
the nested sequent tree in this case adds to the width of the tree, not the height.
It is easy to see that the number of branches in an internal node is bounded by the number
of distinct `boxed' subformulae in the original sequent, so this expansion step cannot be applied indefinitely
without applying step (vi), as the number of distinct boxed subformulae is bounded and no new
internal nodes are created. So the combination of steps (i) -- (v) always terminates for
a given input sequent. The only possible cause of termination is if step (vi) can be applied
infinitely often. We next show that this is not the case. 

The expansion in step (vi) adds to the height of the input nested sequent tree. Because of the
loop checking condition in the step, the height of the trees generated during proof search 
is bounded; we give a more precise bound next. 
Let $m$ be the number of states in $\Acal$ and let $n$ be the number of 
subformulae of $A$. Then the total number of different sets of
formulae and labeled formulae (with labels from $\Acal$) is bounded
by $2^{(m+1)n}.$ Therefore, any set-based nested sequent generated during proof search 
will not cross this bound without creating a loop node. As the height of the trees generated during
proof search is bounded, and the number of branches at each node of the trees is also bounded, 
there are only finitely many possible nested sequent trees that can be generated in each branch of the proof
search. Note that every recursive call in the proof procedure adds something to the input nested sequent,
so every branch in the proof search generates pairwise distinct (set-based) nested sequents. As the number of
possible set-based nested sequents is bounded, the depth of the search is bounded, and because the branching
in proof search is also bounded (i.e., it is a binary branch, created when applying the $\land_d$ rule in step (iii)), 
the search tree must be finite, and thefore the search procedure must terminate.
\end{proof}

\thmhead{Lemma}{lm:S-stable}
Let $S$ be a context-free closed semi-Thue system.  
If $\Gamma$ is an $S$-stable nested sequent, then there exists
a model $\mathfrak{M}$ such that for every node $x$ in $\Gamma$
and for every $A \in \Gamma|x$, there exists a world $w$ in $\mathfrak{M}$
such that $\mathfrak{M}, w \not \models A.$
\begin{proof}
Let $\vec x = x_1,\dots, x_n$
be the list of (pairwise distinct) unrealised leaf nodes in $\Gamma.$
Because $\Gamma$ is $S$-stable, we have a function $\lambda$ assigning 
each unrealised leaf node $x_i$ to an ancestor node $\lambda(x_i)$ such
that $\Gamma|x_i = \Gamma | \lambda(x_i)$, and for every node $y$ and $z$ in 
$\Gamma$, we have that $\Gamma$ is $(S,\Pcal(y,z))$-propagated, where
$
\Pcal(y,z) = \Rcal(\Gamma,y,z)[\vec x \assign \lambda(\vec x)].
$
Then define 
$
\Mf = \langle W, \{R_a \mid a \in \Sigma\}, V\rangle
$
where 
\begin{itemize}
\item $W$ is the set of nodes of $\Gamma$ minus the nodes $\vec x,$
\item for every $x,y \in W$, $R_a(x,y)$ iff $\Pcal(x,y) \cap L_a(S) \not = \emptyset$, and 
\item $V(p) = \{ x  \in W \mid \neg p \in \Gamma | x \}.$
\end{itemize}
We now show that if $A \in \Gamma|v$ then there is a $w \in W$ 
such that $\Mf, w \not \models A,$ where  
the world $w$ is determined as follows: 
if $v$ is in $\vec x$, then $w = \lambda(v)$; otherwise,
$w = v.$
We prove this by induction on the size of $A.$ 
The only interesting cases are 
those where $A = \odia a C$ or $A = \obox a C$ for some $a$ and $C.$
\begin{itemize}
\item Suppose $A = \odia a C.$ 
Suppose, for a contradiction, that $\Mf, w \models \odia a C.$
That means there exists a $w'$ such that $R_a(w,w')$ and 
$\Mf, w' \models C.$
By the definition of $R_a$, we have that 
$\Pcal(w,w') \cap L_a(S) \not = \emptyset.$
Because $\Gamma$ is $S$-stable, by Definition~\ref{def:S-stable}(iv), 
it is $(S,\Pcal(w,w'))$-propagated. This means that $C \in \Gamma|w'.$
Then by the induction hypothesis, $\Mf, w' \not \models C,$ which
contradicts our assumption.
\item Suppose $A = \obox a C.$
To show $\Mf, w \not \models \obox a C$, it is enough to show
there exists $w'$ such that $R_a(w,w')$ and $\Mf, w' \not \models C.$

Note that $w$ must be an internal node in $\Gamma$, so by the $S$-stability of $\Gamma$, 
node $w$ in $\Gamma$ must be realised. 
Therefore there exists a node $z$ such that $w \next{a} z$ in $\Gamma$
and $C \in \Gamma | z.$
If $z \not \in \vec x$, then let $w' = z$; otherwise, let $w' = \lambda(z).$
In either case, $\Gamma | z = \Gamma | w'$, so in particular, $C \in \Gamma|w'.$
Also, in either case, the propagation automata $\Pcal(w, w')$ contains 
a transition $w \trans{a}_{\Pcal(w,w')} w'$ (in the case where $z \in \vec x$, this is
because $\lambda(z)$ is identified with $z$). 
Obviously, $a \in L_a(S)$, so  
$L_a(S) \cap \Pcal(w,w') \not = \emptyset,$ so by the definition of $R_a$,
we have $R_a(w,w').$
Since $C \in \Gamma|w'$, by the induction hypothesis, $\Mf, w' \not \models C.$
So we have $R_a(w,w')$, and $\Mf, w' \not \models C$, therefore 
$\Mf, w \not \models \obox a C.$
\end{itemize}
\end{proof}

\thmhead{Theorem}{thm:Prove-sound-complete}
Let $S$ be a context-free closed semi-Thue system.  
For every formula $F$, $Prove(S, \{F\})$ returns $\top$ if and only if
$F$ is provable in $\DKm(S).$
\begin{proof}
{\em (Outline)} 
One direction, i.e., $Prove(S,\{F\}) = \top$ implies that $F$ is provable
in $\DKm(S)$, follows from the fact that steps of $Prove$ are simply backward
applications of rules of $\DKm(S).$ 
To prove the other direction, we note that if $F$ has a derivation in $\DKm(S)$,
it has a derivation of a minimal length, say $\Pi$. 
In particular, in such an derivation,
there are no two identical nested sequents in any branch of the derivation.
Because in $\DKm(S)$ each backward application of a  
rule retains the principal formula of the rule, every application of a rule 
in $\Pi$ will eventually be covered by one of the steps of $Prove.$ 
Since there are only finitely many rule applications in $\Pi$, eventually
these will all be covered by $Prove$ and therefore it will terminate. 
For example, if $\Pi$ ends with a diamond (propagation) rule applied to a non-saturated sequent,
the $Prove$ procedure will choose to first saturate the sequent before applying
the propagation rule. Since all rules are invertible, we do not lose any
provability of the original sequent, but the $Prove$ procedure may end up
doing more steps. We need to show, additionally, that  
every sequent arising from the execution of $Prove(S,\{F\})$ is not $S$-stable. Suppose
otherwise, i.e., the procedure produces an $S$-stable sequent $\Delta$. Now it must
be the case that $F$ is in the root node of $\Delta.$ By Lemma~\ref{lm:S-stable}, this means
there exists a countermodel that falsifies $F$, contrary to the validity of $F$. 
\end{proof}

\thmhead{Theorem}{thm:grammar-proc-terminates}
Let $S$ be a regular closed semi-Thue system. 
Then for every formula $F$, the procedure $Prove(S, \{F\})$ terminates. 
\begin{proof}
Since $S$ is regular, there exists an automaton $\Acal$ such that
$Prove_1(\Acal,\{F\})$ terminates. We choose the minimal deterministic finite state automaton
$\Acal$ that corresponds to $S.$

Suppose $Prove_1(\Acal,\{F\})=\top$. Then $F$ must be derivable in $\DKm(\Acal)$ by
Theorem~\ref{thm:auto correctness}. Since $\DKm(\Acal)$ and $\DKm(S)$ are equivalent (Theorem~\ref{thm:DKm-A-eq-S}),
there must also be a derivation of $F$ in $\DKm(S)$. Then by Theorem~\ref{thm:Prove-sound-complete},
$Prove(S,\{F\})$ must terminate and return $\top.$

Suppose $Prove_1(\Acal,\Gamma)=\bot$. Then there exists an $\Acal$-stable
$\Gamma'$ that can be constructed from $\Gamma$ in the execution of
$Prove_1(\Acal,\Gamma)$. It can be shown that a $\Delta$
that is identical to $\Gamma'$ without any labelled formulae can be constructed
in the execution of $Prove_2(S,\Gamma,d)$ for some $d$. 
We claim that $\Delta$ is $S$-stable.
Saturation, propagation and the realisation of internal nodes follow immediately
from the construction, it remains to find a function $\lambda$ as in
Definition~\ref{def:S-stable}. We claim that such a function is given by
$\lambda(x)=y$ where $y$ is the closest ancestor of $x$ in $\Gamma'$ such that 
$\Gamma'|x=\Gamma'|y$. That is, we identify each unrealised leaf
with the same node it would have been identified with in
$Prove_1(\Acal,\Gamma)$. 

Let $\vec i = i_1,\dots,i_l$ be the list of all unrealised leaf nodes in $\Delta$
and let $\Pcal(x,y) = \Rcal(\Delta, x, y)[\vec i \assign \lambda(\vec i)].$
(Note that as the tree structures of $\Gamma'$ and $\Delta$
are identical, we also have $\Pcal(x,y) = \Rcal(\Gamma', x, y)[\vec i \assign \lambda(\vec i)].$)
For a contradiction, suppose there exists $j$ and $k$ such that $\Delta$
is not $(S,\Pcal(j,k))$-propagated, i.e., there exist $\odia a A \in \Delta|j$, 
such that  $A\notin\Delta|k$ but $\Pcal(j,k) \cap L_a(S)\neq\emptyset.$ In other words, there is a word
$b_1\dots b_n\in \Pcal(j,k) \cap L_a(S)$, 
and a sequence of states $x_0,\dots,x_n$ in $\Pcal(j,k)$ such that
$x_0=j,x_n=k,x_{m-1}\trans{b_m}_{\Pcal(j,k)} x_m,$ where $1 \leq m < n.$
We will show that there exists a function $St$ assigning states of $\Acal$ to
nodes of $\Gamma'$ satisfying:
$St(x_0)\in I$,
$St(x_{m-1})\trans{b_m}_{\Acal} St(x_m)$,
$St(x_n)\in F$,
and 
$St(x_m):A\in\Gamma'|x_m.$
This will establish that $St(x_n):A\in\Gamma'|x_n$ where $St(x_n)\in
F$. Then by $\Acal$-propagation, it will follow that $A\in\Gamma'|k$,
and therefore $A \in \Delta|k$, contradicting our assumption that $A \not \in \Delta|k$.

Let $s_0,\dots,s_n$ be the run of $\Acal_a$ associated with input
$b_1\dots b_n$. Let $St(x_m)=s_m$. As $L(\Acal_a)=L_a(S)$,
we know that $s_0,\dots,s_n$ is an accepting run. This gives us $St(x_0)\in I,
St(x_{m-1})\trans{b_m}_\Acal St(x_m)$ and $St(x_n)\in F$. It remains to show that
$St(x_m):A\in\Gamma'|x_m$. We will do so by induction on $m$.

Base case: As $\odia a A\in\Gamma'|x_0$, by $\Acal$-propagation
we obtain $s_0:A\in\Gamma'|x_0$.

Inductive case: Suppose $x_m\trans{b_{m+1}}_{\Pcal(j,k)} x_{m+1}$. By the inductive
hypothesis, $s_m:A\in\Gamma'|x_m.$ There are two cases to consider:
\begin{itemize}
\item The transition $x_m \trans{b_{m+1}}_{\Pcal(j,k)} x_{m+1}$ also exists
in $\Rcal(\Gamma',j,k)$. In this case, by $\Acal$-propagation, we have $s_{m+1} : A \in \Gamma'|x_{m+1}.$
\item The transition $x_m \trans{b_{m+1}}_{\Pcal(j,k)} x_{m+1}$ is obtained from $\Rcal(\Gamma',j,k)$ through 
the identification of unrealised leaf nodes with their closest ancestors. There are two subcases:
\begin{itemize}
\item $x_m = \lambda(y)$ for some unrealised leaf node $y$ such
that $\Gamma'|x_m = \Gamma'|y$, and $y \trans{b_{m+1}}_{\Rcal(\Gamma',j,k)} x_{m+1}.$
Since $\Gamma'|x_m = \Gamma'|y$, we have that $s_m : A \in \Gamma'|y$ and
it follows by $\Acal$-propagation that $s_{m+1} : A \in \Gamma' | x_{m+1}.$
\item $x_{m+1} = \lambda(y)$ for some unrealised leaf node $y$ such that
that $\Gamma'|x_{m+1} = \Gamma'|y$, and $x_m \trans{b_{m+1}}_{\Rcal(\Gamma',j,k)} y.$
By $\Acal$-propagation, $s_{m+1} : A \in \Gamma' | y = \Gamma' | x_{m+1}.$
\end{itemize}
\end{itemize}

Thus when $Prove(S,\Gamma)$ calls $Prove_2(S,\Gamma,d)$, it will construct an
$S$-stable sequent and terminate.
\end{proof}

\end{document}